\renewcommand{\baselinestretch}{1.5}
 \theoremstyle{lemma} \newtheorem{lemma}{Lemma}
\theoremstyle{corollary} \newtheorem{corollary}{Corollary}
\theoremstyle{theorem} \newtheorem{theorem}{Theorem}
\theoremstyle{definition}
\theoremstyle{remark}\newtheorem{remark}{Remark}
 \theoremstyle{proposition}\newtheorem{proposition}{Proposition}
\theoremstyle{example}
   \DeclareMathOperator{\sign}{sign}
\renewcommand{\Pr}{\mathrm{pr}}
\def\T{{ \mathrm{\scriptscriptstyle T} }}
\newcommand{\dma}{\mu_{+A}-\mu_{-A}}
\newcommand{\dhma}{\hat{\mu}_{+A}-\hat{\mu}_{-A}}
\def\T{{ \mathrm{\scriptscriptstyle T} }}
\DeclareMathOperator{\Cov}{Cov}
\title{Sparse Semiparametric Discriminant Analysis}
\author{Qing Mai\\
Department of Statistics, Florida State University\\
Tallahassee, FL, 32306-4330, U.S.A.\\
mai@stat.fsu.edu\\
\\
Hui Zou\\
School of Statistics, University of Minnesota\\
Minneapolis, MN, 55455, U.S.A.\\
zouxx019@umn.edu\\
}
\date{}
\begin{document}
\maketitle
\begin{abstract}
In recent years, a considerable amount of work has been devoted to generalizing
linear discriminant analysis to overcome its incompetence for
high-dimensional classification \citep{WT_2011,CL2011,SLDA,ROAD}. In this paper, we develop high-dimensional
sparse semiparametric discriminant analysis (SSDA) that
generalizes the normal-theory discriminant analysis in two ways: it
relaxes the Gaussian assumptions and can handle ultra-high dimensional
classification problems. If the underlying Bayes rule is sparse,
SSDA can estimate the Bayes rule and select the true features
simultaneously with overwhelming probability, as long as the
logarithm of dimension grows slower than the cube root of sample
size. Simulated and real examples are used to demonstrate the finite
sample performance of SSDA. At the core of the theory is a new
exponential concentration bound for semiparametric Gaussian copulas,
which is of independent interest.
\end{abstract}
Keywords: Gaussian copulas, Linear discriminant analysis,
high-dimension asymptotics, Semiparametric model.

\newpage

\section{Introduction}

Despite its simplicity, linear discriminant analysis (LDA) has
proved to be a valuable classifier in many applications
\citep{STATLOG,Hand_2006}. Let $X=(x_1,\ldots,x_p)$ denote the
predictor vector and $Y\in \{+1,-1\}$ be the class label. The LDA
model states that $X \mid Y \sim N(\mu_Y, \Sigma), $ yielding the
Bayes rule
\begin{equation*}\label{LDA}
\hat
Y^{\mathrm{Bayes}}=\sign\left[\{X-(\mu_++\mu_-)/2\}^\T{\Sigma}^{-1}(\mu_+-\mu_-)+\log{(\pi_+/\pi_-)}\right],
\end{equation*}
where $\pi_y=\Pr(Y=y)$. Given $n$ observations $(Y^{i},X^{i}), 1
\le i \le n$, the classical LDA classifier estimates the Bayes
rule by substituting $\Sigma$, $\mu_y$ and $\pi_y$ with their
sample estimates. As is well known, the classical LDA fails to
cope with high-dimensional data where the dimension, $p$, can be
much larger than the sample size, $n$. A considerable amount of
work has been devoted to generalizing LDA to meet the
high-dimensional challenges. It is generally agreed that
effectively exploiting sparsity is a key to the success of a
generalized LDA classifier for high-dimensional data. Early
attempts include the nearest shrunken centroids classifier (NSC)
\citep{Tibshirani_shrunken} and later the features annealed
independence rule (FAIR) \citep{FF_2008}. These two methods
basically follow the diagonal LDA paradigm with an added variable
selection component, where correlations among variable are completely ignored.
Recently, more sophisticated sparse LDA proposals have been proposed; see \citet{TJ}, \citet{Wu08}, \citet{CLT_2008}, \citet{WT_2011},
\citet{SLDA}, \citet{shao}, \citet{CL2011} and
\citet{ROAD}. In these papers, a lot of empirical and theoretical
results have been provided to demonstrate the competitive
performance of sparse LDA for high-dimensional classification.
These research efforts are rejuvenating discriminant analysis.

However, the existing  sparse LDA methods become ineffective for non-normal data, which is easy to see from the theoretical viewpoint.
See also empirical evidence given in Section 5.1. In the lower dimensional classification problems,
some researchers have considered ways to relax the Gaussian distribution assumption. For example, \citet{mda} proposed the
mixture discriminant analysis (MDA) that uses a mixture of Gaussian
distributions to model the conditional densities of variables given the class label. MDA is
estimated by the Expectation-Maximization algorithm. MDA is a
nonparametric generalization of LDA, but it is not clear how to
further extend MDA to the high-dimensional classification setting with the ability to do variable selection.
\citet{LinJeon} proposed an interesting semiparametric linear discriminant analysis (SeLDA) model. Their model assumes that
after a set of unknown monotone univariate transformations the observed data follow the classical LDA model.
\citet{LinJeon} further showed that the SeLDA model can be accurately
estimated when $p$ is fixed and $n$ goes to infinity. However, the estimator in \citet{LinJeon}
cannot handle high-dimensional classification problems, especially
when $p$ exceeds $n$.

In this paper, we develop high-dimensional sparse semiparametric discriminant analysis (SSDA), a generalization of SeLDA for
high-dimensional classification and variable selection. In particular, we propose a new estimator for the transformation
function and establish its uniform consistency property as long as the logarithm of $p$ is smaller than the cube root of $n$. With the
new transformation estimator, we can transform the data and fit a sparse LDA classifier. In this work we use the direct sparse
discriminant analysis (DSDA) developed by \citet{SLDA}. SSDA enjoys great computational efficiency: its computational complexity
grows linearly with $p$. We show that, if the Bayes rule of the SeLDA model is sparse, then SSDA can consistently select the
important variables and estimate the Bayes rule. At the core of the theory is an exponential concentration bound for semiparametric Gaussian copulas,
which is of independent interest.


\section{Semiparametric LDA Model}

Consider the binary classification problem where we have observed
$n$ random pairs $(Y^{i},X^{i}), 1 \le i \le n$ and wish to classify
$Y$ using a function of $X$. \citet{LinJeon} proposed the following
semiparametric LDA (SeLDA) model that assumes that
\begin{equation}\label{SeLDA}
\left(h_1(X_1),\cdots,h_p(X_p)\right)\mid Y \sim N(\mu_{Y}, \Sigma),
\end{equation}
where $h=(h_1, \cdots, h_p)$ is a set of strictly monotone
univariate transformations. It is important to note that the SeLDA
model does not assume that these univariate transformations are
known or have any parametric forms. By properties of the Gaussian
distribution, $h$ is only unique up to location and scale shifts.
Therefore, for identifiability, assume that $\mu_+=0$,
$\Sigma_{jj}=1, 1 \le j \le p$. The Bayes rule of the SeLDA model is
\begin{equation*}\label{eq:SeLDA}
\hat
Y^{\mathrm{Bayes}}=\sign\left[\{h(X)-(\mu_++\mu_-)/2\}^\T\Sigma^{-1}(\mu_+-\mu_-)+\log{(\pi_+/\pi_-)}\right].
\end{equation*}

The SeLDA model is a very natural generalization of the LDA model.
It is equivalent to modelling the within-group distributions with
semiparametric Gaussian copulas. For any
continuous univariate random variable, $W$, we have
\begin{equation}\label{univariate}
\Phi^{-1}\circ F(W)\sim N(0,1),
\end{equation}
where $F$ is the cumulative probability function (CDF) of $W$ and
$\Phi$ is the CDF of the standard normal distribution. Gaussian copula is a multivariate generalization
of that simple fact of univariate case. Semiparametric Gaussian copula has generated a lot of research interests in recent years;
see \cite{wellner1997}, \cite{song2000}, \cite{tsukahara2005}, \cite{chen2006} and \cite{chen2006jasa}.
The SeLDA model is the first application of semiparametric Gaussian copula in the context of classification.

The following lemma relates the univariate transformation function to the univariate marginal CDF of each predictor.

{\begin{lemma}\label{multi-trans}
Consider a random vector $(X_1,\ldots,X_p)$ with strictly increasing marginal CDFs $F_1,\ldots,F_p$. If there exists a set of strictly increasing univariate functions $h=(h_1,\ldots,h_p)$ such that $h(X)\sim N(0,\Sigma)$, we must have $h_j=\Phi^{-1}\circ F_j$.
\end{lemma}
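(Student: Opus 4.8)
The plan is to reduce the multivariate statement to a purely marginal one, since the conclusion $h_j=\Phi^{-1}\circ F_j$ involves only the $j$th coordinate. First I would fix an index $j$ and extract the marginal law of $h_j(X_j)$ from the joint normality hypothesis: because $h(X)\sim N(0,\Sigma)$, each coordinate $h_j(X_j)$ is marginally $N(0,\Sigma_{jj})$, and under the identifiability convention $\Sigma_{jj}=1$ this is exactly the standard normal. Thus the off-diagonal structure of $\Sigma$ is irrelevant to the claim; only the standard-normal marginal is used.

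Next I would exploit strict monotonicity of $h_j$ to relate the CDF of $X_j$ to that of $h_j(X_j)$. Since $h_j$ is strictly increasing, for every real $x$ the events $\{X_j\le x\}$ and $\{h_j(X_j)\le h_j(x)\}$ coincide. Taking probabilities and using $h_j(X_j)\sim N(0,1)$ gives
\begin{equation*}
F_j(x)=\Pr(X_j\le x)=\Pr\{h_j(X_j)\le h_j(x)\}=\Phi\{h_j(x)\}.
\end{equation*}
Finally I would invert $\Phi$. As $\Phi$ is a continuous strictly increasing bijection from $\mathbb{R}$ onto $(0,1)$ and $F_j$ is strictly increasing by assumption, applying $\Phi^{-1}$ to both sides yields $h_j(x)=\Phi^{-1}\{F_j(x)\}$ wherever $F_j(x)\in(0,1)$, which is the assertion. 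This is simply the multivariate counterpart of the univariate identity recalled in \eqref{univariate}.

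I do not anticipate a serious obstacle: the argument is a direct application of the probability integral transform, and the short display above is essentially the entire proof. The only points demanding care are bookkeeping ones. First, it is \emph{strict} monotonicity (not mere monotonicity) that licenses the event identity $\{X_j\le x\}=\{h_j(X_j)\le h_j(x)\}$ and hence the clean passage to CDFs. Second, the standard-normal marginal, i.e.\ the normalization $\Sigma_{jj}=1$, is what makes $\Phi$ rather than a rescaled normal CDF appear, so the stated identity relies on that identifiability convention. Strict monotonicity of $F_j$ then guarantees that $\Phi^{-1}\circ F_j$ is well defined and that $h_j$ is pinned down uniquely on the support of $X_j$.
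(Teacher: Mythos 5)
Your proposal is correct and rests on the same computation as the paper's proof: the identity $F_j(x)=\Pr\{h_j(X_j)\le h_j(x)\}=\Phi\{h_j(x)\}$, which the paper writes equivalently as $F_j[\{h_j^{(1)}\}^{-1}(t)]=\Phi(t)$ inside a uniqueness argument before exhibiting $\Phi^{-1}\circ F_j$ as the unique such transformation. Your direct inversion of $\Phi$ is just a slightly more streamlined packaging of the same probability-integral-transform idea, and your explicit remarks about strict monotonicity and the normalization $\Sigma_{jj}=1$ are consistent with the paper's identifiability convention.
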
}

{In light of Lemma~\ref{multi-trans},} the SeLDA model can be estimated in
the low-dimensional setting. The basic idea is
straightforward: we first find $\hat h_j(\cdot)$ as good estimates
of these univariate transformation functions and then fit the LDA
model on the ``pseudo data" $\left\{Y^{i},\hat h(X^{i})\right\}, 1 \le
i \le n$. To be more specific, in seek of $\hat h_j$, we let
$F_{+j},F_{-j}$ be the CDF of $X_j$ conditional on $Y=+1$ and
$Y=-1$, respectively, and then we have
\begin{equation*}\label{eq:h}
h_j=\Phi^{-1}\circ F_{+j}=\Phi^{-1}\circ F_{-j}+\mu_-.
\end{equation*}
It can be seen that we only need an estimate of $F_{+j}$ or $F_{-j}$. Denote $n_+,n_-$ as the sample size within the positive and the negative class, respectively.
For convenience, we let $n_{+} \ge n_{-}$ throughout this paper. In other words, we code the class label of the majority class as ``$+1$"
and the minority class as ``$-1$". 

Denote $X_{yj}$ as the $j$th entry of an observation
$X$ belonging to the group $Y=y$, and $\tilde F_{+j}$ as the
empirical CDF of $X_{+j}$. Note that, we cannot directly plug in
$\tilde F_{+j}$ so that $\hat h_j=\Phi^{-1}\circ \tilde F_{+j}$,
because infinite values would occur at tails. Instead, $\tilde
F_{+j}$ is Winsorized at a predefined pair of numbers $(a,b)$ to
obtain $\hat F_{+j}^{a,b}$
\begin{equation}\label{Fhat}
\hat F_{+j}^{a,b}(x)=\left\{\begin{array}{cl} b&\mbox{if $\tilde F_{+j}(x) > b$;}\\
\tilde F_{+j}(x)&\mbox{if $a \le \tilde F_{+j}(x) \le b$;}\\
a&\mbox{if $\tilde F_{+j}(x) <a$.}\end{array} \right.
\end{equation}
Then
\begin{equation}\label{SeLDA:h}
  \hat h_j=\Phi^{-1}\circ \hat F_{+j}^{a,b}.
\end{equation}
 The Winsorization can be viewed as a bias-variance
trade-off.

With $\hat h_j$, the covariance matrix $\Sigma$ is estimated by the pooled
sample covariance matrix of $\hat h(X^i)$ and $\mu_{-j}$ is estimated
by
\begin{eqnarray*}\label{SeLDA:mu}
\hat \mu_{-j}&=&q^{-1}[n_-^{-1}\sum_{i=1}^{n_-}\hat
h(X_{-j}^i)\mathrm{1}_{\tilde F(X_{-j}^i)\in
(a,b)}+\phi\{\Phi^{-1}\circ \tilde F_{-j}\circ \tilde
F_{+j}^{-1}(b)\}-\phi\{\Phi^{-1}\circ \tilde F_{-j}\circ
\tilde F_{+j}^{-1}(a)\}]
\end{eqnarray*}
where $\phi$ is the density function for a standard normal random
variable and
\begin{equation*}
q=n_-^{-1}\sum_{i=1}^{n_-}\mathrm{1}_{\tilde
F_{+j}(X^i)\in (a,b)}.
\end{equation*}
$\hat \mu_{-j}$ has this complicated form because of the
Winsorization. \cite{LinJeon} showed that when $p$ is fixed and $n$
tends to infinity, $\hat \Sigma$, $\hat \mu_-$ are consistent.

\section{Estimation of The High-dimensional Semiparametric LDA Model}

We need to address two technical problems when applying the SeLDA model to
high-dimensional classification. First, we must modify the estimator in
\citet{LinJeon} to achieve consistency under ultra-high dimensions.
Second, the SeLDA model is not estimable with the large-$p$-small-$n$ data,
even when we know the true transformation functions.
To overcome this difficulty, we propose to fit a sparse SeLDA model by exploiting a sparsity assumption on the underlying
Bayes rule. For the sake of presentation, we first discuss how to fit a sparse SeLDA model, provided that good estimators of
$h_j(\cdot), 1 \le j \le p$, are already obtained.
After
introducing the sparse SeLDA, we focus on a new strategy to estimate
$h_j(\cdot), 1 \le j \le p $.

\subsection{Exploiting sparsity}

We assume that the Bayes rule of the SeLDA model only involves a
small number of predictors. To be more specific, let $\beta^{\mathrm
{Bayes}}=\Sigma^{-1}(\mu_+-\mu_-)$ and define $A=\{j: \beta^{\mathrm
{Bayes}}_j \neq 0\}$. Sparsity means that $\vert A \vert \ll p.$ An
elegant feature of SeLDA is that it keeps the interpretation of LDA,
that is, variable $j$ is irrelevant if and only if
$\beta_j^{\mathrm{Bayes}} =0$.

Suppose that we have obtained $\hat h_j(\cdot)$ as a good estimate
of $h_j(\cdot), 1 \le j \le p$, we focus on estimating the sparse
LDA model using the ``pseudo data" $\{Y^{i},\hat h(X^{i})\}, 1 \le i
\le n$. Among the previously mentioned sparse LDA proposals in the literature, only
\cite{FF_2008}, \cite{shao}, \cite{CL2011}, \cite{SLDA} and \cite{ROAD} provided theoretical analysis
of their methods. \cite{FF_2008}'s theory assumes that $\Sigma$ is a diagonal matrix.
\cite{shao}'s method works well only under some strong sparsity assumptions on the covariance matrix $\Sigma$ and $\mu_+-\mu_-$.
The sparse LDA methods proposed in \cite{CL2011}, \cite{SLDA} and \cite{ROAD} are shown to work well under general
correlation structures. From the computational perspective, the method in
\cite{SLDA} is most computationally efficient. Therefore, it is the method used here to exploit sparsity.

The proposal in \citet{SLDA}, which is referred to as DSDA, begins with the observation that the classical LDA direction can be exactly recovered by doing linear
regression of $Y$ on $h(X)$ \citep{HTF_2008} where $Y$ is treated as
a numeric variable. Define $\Omega=\Cov\left\{h(X)\right\}$, and
$\beta^{\star}=\Omega^{-1}(\mu_+-\mu_-),\beta^{\mathrm{Bayes}}=\Sigma^{-1}(\mu_+-\mu_-)$. It can be shown that
$\beta^{\star}$ and $\beta^{\mathrm{Bayes}}$ have the same direction.
For variable selection and classification, it suffices to estimate $\beta^{\star}$.
DSDA aims at estimating $\beta^{\star}$ by the following penalized least squares approach:
\begin{eqnarray}\label{SLDA}
\hat
\beta^{\mathrm{DSDA}}&=&\arg\min_{\beta}[n^{-1}\sum^n_{i=1}\left\{Y^i-\beta_0-
h(X^{i})^\T\beta\right\}^2+\sum^p_{j=1}P_{\lambda}
(|\beta_j|)],\\
\hat \beta^{\mathrm{DSDA}}_0&=&-(\hat \mu_++\hat
\mu_-)^\T\hat\beta^{\mathrm{DSDA}}/2+\log{(\hat
\pi_+/\hat\pi_-)}\cdot{({\hat{\beta}^{\mathrm{DSDA}}})^\T}\hat\Sigma\hat\beta^{\mathrm{DSDA}}/\{(\hat\mu_+-\hat\mu_-)^\T\hat\beta^{\mathrm{DSDA}}\} \nonumber
\end{eqnarray}
where, under the LDA model, $h$ is known to be $h(X)=X$, and
$P_{\lambda}(\cdot)$ is a sparsity-inducing penalty, such as Lasso
\citep{Tibs96a} or SCAD \citep{FL_2001}. Then the DSDA classifier is
$\sign\left\{\hat \beta_0^{\mathrm{DSDA}}+h(X)^\T\hat
\beta^{\mathrm{DSDA}}\right\}$.
There are many other penalty functions proposed for sparse
regression, including the elastic net \citep{ZH_2005}, the adaptive
lasso \citep{Zou_2006}, SICA \citep{LvFan} and the MCP
\citep{Zhang}, among others. All these penalties can be used in DSDA. The original paper \citep{SLDA} used the Lasso penalty where $P_\lambda(t)=\lambda t$ for
$t>0$. One could use either $\tt lars$ \citep{EHIT_2004} or $\tt
glmnet$ \citep{FHT_2008} to efficiently implement DSDA.

If we knew these transformation functions $h$ in the SeLDA model,
(\ref{SLDA}) could be directly used to estimate the Bayes rule of
SeLDA. In SSDA we substitute $h_j$ with its estimator $\hat h_j$
and apply sparse LDA methods to $(Y,\hat h(X))$. For example, to use DSDA in the SeLDA model, we solve for
\begin{eqnarray}\label{SSeLDA}
\hat \beta&=&\arg\min_{\beta}[n^{-1}\sum^n_{i=1}\left\{Y^i-\beta_0-
\hat h(X^{i})^\T\beta\right\}^2+\sum^p_{j=1}P_{\lambda}
(|\beta_j|)],\\
\hat \beta_0&=&-(\hat \mu_++\hat
\mu_-)^\T\hat\beta/2+\log{(\hat
\pi_+/\hat\pi_-)}\cdot{{\hat\beta}^\T}\hat\Sigma\hat\beta/\{(\hat\mu_+-\hat\mu_-)^\T\hat\beta\} \nonumber
\end{eqnarray}
Then (\ref{SSeLDA}) yields the SSDA classification rule: $\sign\left\{\hat \beta_0+\hat
h(X)^\T\hat\beta\right\}$.

\subsection{Uniform estimation of transformation functions}

We propose a high-quality estimator of the monotone
transformation function.
In order to establish the theoretical property of SSDA, we need
all $p$ estimators of the transformation function to uniformly
converge to the truth at a certain fast rate, even when $p$ is much
larger than $n$. Our estimator is defined as
\begin{equation}\label{Fhat1}
\hat F_{+j}(x)=\left\{\begin{array}{cl} 1-1/n_+^2&\mbox{if $\tilde F_{+j}(x) > 1-1/n_+^2$}\\
\tilde F_{+j}(x)&\mbox{if $1/n_+^2 \le \tilde F_{+j}(x) \le 1-1/n_+^2$}\\
1/n_+^2&\mbox{if $\tilde F_{+j}(x)
<1/n_+^2$}\end{array} \right.
\end{equation}
and then
\begin{equation*}
\hat h_j=\Phi^{-1}\circ \hat F_{+j}.
\end{equation*}
Note that the class with a bigger size is coded as ``$+$" as mentioned in section 2.

In other words, instead of fixing the Winsorization parameters $a,b$ as in \eqref{Fhat}, we let
\begin{equation}\label{ab}
(a,b)=(a_n,b_n)=(1/n_+^2, 1-1/n_+^2).
\end{equation}
 With the presence of $\Phi^{-1}$, it is necessary to choose $a_n>0,b_n<1$ to avoid extreme values at tails. On the other hand, $a_n\rightarrow 0, b_n\rightarrow 1$ so that the bias will automatically vanish as $n\rightarrow \infty$. To further see that \eqref{ab} are proper choices of $a_n,b_n$, see the theory developed in Section 3 for mathematical justification.

Other estimators have been proposed. For example, \citet{SemiCov} considered a one-class problem
with Gaussian copulas, which essentially states $h(X)\sim
N(0,\Sigma)$, and aims to estimate $\Sigma^{-1}$. In their paper,
$h_j$ is estimated by $\hat h_j=\Phi^{-1}\circ \hat F^{a_n,b_n}$,
where
$a_n=1-b_n=(4n^{1/4}\sqrt{\pi\log{n}})^{-1}.$
\citet{SemiCov} showed that this estimator is consistent when $p$ is smaller than any polynomial order of $n$, but  it is not clear
whether the final SSDA can handle non-polynomial high
dimensions.

\begin{remark} {\it
Rank-based estimators were independently proposed by \citet{semigraph,XZ} for estimating $\Sigma^{-1}$ without estimating the transformation functions.
However, in the discriminant analysis problem considered here we need to estimate both $\Sigma^{-1}$ and the mean vectors. The estimation of the mean vectors requires to estimate
the transformation functions.}
\end{remark}

\subsection{The pooled transformation estimator}

We now consider an estimator that pools information from both classes. According to \eqref{eq:h}, we can find the estimated transformation functions by choosing proper $\hat F_{+j}$ and/or $\hat F_{-j},\hat\mu_{-j}$. The naive estimate only uses the data from the positive class because of the difficulty in estimating $\mu_{-j}$. However, we have the following lemma that will assist us in developing a more sophisticated transformation estimation utilizing all the data points.

\begin{lemma}\label{h:twoclass}
Consider the model in \eqref{SeLDA}. Then we have
\begin{enumerate}
\item Conditional on $Y=-1$, we have
\begin{equation*}
E(\Phi^{-1}\circ F_{+j}(X_j))=\mu_{-j}.
\end{equation*}
\item Conditional on $Y=+1$, we have
\begin{equation*}
E(\Phi^{-1}\circ F_{-j}(X_j))=-\mu_{-j}.
\end{equation*}
\end{enumerate}
\end{lemma}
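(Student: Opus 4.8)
The plan is to reduce both statements to two pointwise identities relating the transformation $h_j$ to the within-class CDFs, after which each claim becomes a one-line computation of a normal mean. First I would invoke Lemma~\ref{multi-trans}, applied separately within each class, to pin down $h_j$. Conditional on $Y=+1$, the $j$th coordinate $h_j(X_j)$ is $N(\mu_{+j},\Sigma_{jj})=N(0,1)$ by the identifiability constraints $\mu_+=0$ and $\Sigma_{jj}=1$; since $h_j$ is the unique strictly increasing map standardizing $X_j$ in the positive class, Lemma~\ref{multi-trans} gives $h_j=\Phi^{-1}\circ F_{+j}$. Conditional on $Y=-1$ the same coordinate is $N(\mu_{-j},1)$, so $h_j-\mu_{-j}$ standardizes $X_j$ within the negative class, whence $h_j-\mu_{-j}=\Phi^{-1}\circ F_{-j}$, i.e. $h_j=\Phi^{-1}\circ F_{-j}+\mu_{-j}$. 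These are exactly the identities already recorded above for $h_j$.

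With these in hand, part~1 follows by substituting $\Phi^{-1}\circ F_{+j}=h_j$ and evaluating the conditional expectation in the negative class: $E\{\Phi^{-1}\circ F_{+j}(X_j)\mid Y=-1\}=E\{h_j(X_j)\mid Y=-1\}=\mu_{-j}$, because $h_j(X_j)\mid Y=-1\sim N(\mu_{-j},1)$. Symmetrically, for part~2 I would substitute $\Phi^{-1}\circ F_{-j}=h_j-\mu_{-j}$ and take the conditional expectation in the positive class, where $h_j(X_j)\mid Y=+1\sim N(0,1)$, giving $E\{\Phi^{-1}\circ F_{-j}(X_j)\mid Y=+1\}=E\{h_j(X_j)\mid Y=+1\}-\mu_{-j}=-\mu_{-j}$.

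The only point requiring care, and the step I would treat as the main (mild) obstacle, is the correct application of Lemma~\ref{multi-trans} within each class together with the bookkeeping of the identifiability constraints, since it is crucial that the expectation of $\Phi^{-1}\circ F_{+j}$ is computed in the \emph{opposite} class to the one defining $F_{+j}$ (and likewise for $F_{-j}$). Once the two transformation identities are correctly established, the expectations are immediate from the normality of $h_j(X_j)$ within each class, so no further estimates or concentration arguments are needed here.
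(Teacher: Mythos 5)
Your proposal is correct and follows essentially the same route as the paper: the paper's proof likewise rests on the identities $h_j=\Phi^{-1}\circ F_{+j}=\Phi^{-1}\circ F_{-j}+\mu_{-j}$ (recorded in Section~2) and the fact that $h_j(X_j)$ is $N(0,1)$ in the positive class and $N(\mu_{-j},1)$ in the negative class, from which both expectations are immediate. You merely spell out the derivation of those identities via Lemma~\ref{multi-trans}, which the paper leaves implicit.
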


Set $\hat F_{+j}$ as defined in \eqref{Fhat1} and $\hat F_{-j}$ as the empirical CDF for $X_j$ conditional on $Y=-1$ Winsorized at $(a_{-n},b_{-n})=(1/n_-^2,1-1/n_-^2)$. Then by Lemma~\ref{h:twoclass}, we can define a pooled estimator of $\mu_{-j}$:
\begin{equation*}
\hat\mu_{-j}^{\mathrm{(pool)}}=\hat\pi_+\hat \mu_{-j}^{(+)}+\hat\pi_-\hat\mu_{-j}^{(-)},
\end{equation*}  
where 
\begin{eqnarray*}
\hat \mu_{-j}^{(+)}&=&\frac{1}{n_-}\sum_{Y^i=-1}\Phi^{-1}\circ \hat F_{+j}(X_j^i)\\
\hat \mu_{-j}^{(-)}&=&-\frac{1}{n_+}\sum_{Y^i=+1}\Phi^{-1}\circ \hat F_{-j}(X_j^i)
\end{eqnarray*}
Then
\begin{equation*}
\hat h_j^{\mathrm{(pool)}}=\hat\pi_+ \hat h_j^{(+)}+\hat\pi_- \hat h_j^{(-)},
\end{equation*}
where 
\begin{eqnarray*}
\hat h_j^{(+)}&=&\Phi^{-1}\circ \hat F_{+j}\\
\hat h_j^{(-)}&=&\Phi^{-1}\circ \hat F_{-j}+\hat\mu_{-j}^{\mathrm{(pool)}}
\end{eqnarray*}
This estimator utilizes all the data points. We refer to this estimator as the pooled estimator. In Section 5 we will present numerical evidence that the pooled estimator does improve over the naive estimator in many cases.


\section{Theoretical Results}

\subsection{Estimation of transformation functions}
To explore the consistency property of SSDA, we first study the
estimation accuracy of semiparametric Gaussian copulas. The results in this
subsection are applicable to any statistical model using semiparametric Gaussian copulas,
which is of independent interest itself. Consider the one-class estimation case first. Assume
that $X$ is a $p$-dimensional random variable such that $h(X)\sim
N(\mathrm{0}_p, \Sigma)$ with $h_j=\Phi^{-1}\circ F_j$ and $\hat
h_j=\Phi^{-1}\circ \hat F_{j}$, where $\hat F_{j}$ is defined as in
(\ref{Fhat1}). Denote $\hat \mu_j$ and $\hat \sigma_{jk}$ as the
sample mean and sample covariance for corresponding features. We
establish exponential concentration bounds for $\hat \mu_j$ and
$\hat \sigma_{jk}$. For writing convenience, we use $c$ to denote generic constants throughtout.

\begin{theorem}\label{accuracy}
Define
\begin{eqnarray*}
\zeta_1^*(\epsilon)&=&
2\exp(-cn\epsilon^2)+4\exp(-cn^{1-2\rho}\epsilon^2/\rho)+4\exp(-cn^{\frac{1}{2}-\rho}),\\
\zeta_2^*(\epsilon)&=&
c\exp(-cn\epsilon^2)+c\exp(-cn^{\frac{1}{3}-\rho})+c\exp(-cn^{1-\rho})+c\exp\{-c(\log^2n)n^{1-2\rho}\epsilon^2/\rho^2\}.
\end{eqnarray*}
 For sufficiently large $n$
and any $0<\rho<\frac{1}{3}$, there exists a positive constant
$\epsilon_0$ such that, for any $0<\epsilon<\epsilon_0$, we have
\begin{eqnarray}
\Pr(|\hat \mu_j-\mu_j|>\epsilon)&\le& \zeta_1^*(\epsilon)\label{eq:muhat}\\
\Pr(|\hat \sigma_{jk}-\sigma_{jk}|>\epsilon)&\le&
\zeta_2^*(\epsilon)\label{eq:sigmahat}
\end{eqnarray}
\end{theorem}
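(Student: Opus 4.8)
The plan is to reduce everything to a single univariate i.i.d.\ sequence and then isolate the two sources of error in $\hat h_j=\Phi^{-1}\circ\hat F_j$: the ordinary sampling fluctuation of the transformed data, and the distortion created by composing $\Phi^{-1}$ with the Winsorized empirical CDF. Since $F_j$ is continuous and strictly increasing, the probability integral transform makes $U_i:=F_j(X_j^i)$ i.i.d.\ $\mathrm{Uniform}(0,1)$ and $Z_i:=h_j(X_j^i)=\Phi^{-1}(U_i)$ i.i.d.\ $N(\mu_j,\sigma_{jj})$ (here $\mu_j=0$ under the copula normalization). Because $F_j$ is monotone, the empirical CDF evaluated at the data equals the empirical CDF $\hat G_n$ of the uniforms evaluated at $U_i$, so that $\hat h_j(X_j^i)=\Phi^{-1}\{W_n(\hat G_n(U_i))\}$, where $W_n$ denotes Winsorization at $(n^{-2},1-n^{-2})$ as in \eqref{Fhat1}. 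I would then write
\begin{equation*}
\hat\mu_j-\mu_j=\frac1n\sum_{i=1}^n (Z_i-\mu_j)+\frac1n\sum_{i=1}^n\big[\Phi^{-1}\{W_n(\hat G_n(U_i))\}-Z_i\big].
\end{equation*}
The first average is a centered sum of i.i.d.\ Gaussians, so a standard sub-Gaussian bound gives the leading $\exp(-cn\epsilon^2)$ term of $\zeta_1^*$.

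The core of the argument is the second, ``estimation'' term. On the high-probability Dvoretzky--Kiefer--Wolfowitz (DKW) event $\{\sup_t|\hat G_n(t)-t|\le s\}$, whose complement has probability at most $2\exp(-2ns^2)$, a mean-value expansion gives
\begin{equation*}
\Phi^{-1}\{W_n(\hat G_n(U_i))\}-\Phi^{-1}(U_i)=\frac{W_n(\hat G_n(U_i))-U_i}{\phi\{\Phi^{-1}(\xi_i)\}},
\end{equation*}
with $\xi_i$ between the two arguments, so that the only difficulty is that $1/\phi(\Phi^{-1}(t))\asymp e^{\Phi^{-1}(t)^2/2}$ blows up as $t\to0,1$. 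I would therefore split the indices according to whether $U_i$ lies in the central band $[\delta_n,1-\delta_n]$ or in the tails, taking $\delta_n=n^{-\rho}$. On the central band the derivative factor is at most of order $e^{\rho\log n}=n^{\rho}$, so the central contribution is $\lesssim n^{\rho}s$; requiring this to be $O(\epsilon)$ forces $s\asymp\epsilon n^{-\rho}$, and feeding this scale into DKW produces the $\exp(-cn^{1-2\rho}\epsilon^2/\rho)$ term, the $1/\rho$ reflecting the $\rho\log n$ growth of $\Phi^{-1}(\delta_n)^2$. For the tail indices I would bound their number by its $\mathrm{Binomial}(n,2\delta_n)$ concentration around $2n^{1-\rho}$, cap the Winsorized value using $|\Phi^{-1}(1-n^{-2})|\lesssim\sqrt{\log n}$, and control $n^{-1}\sum_{\text{tail}}|Z_i|$ through a bound on the transformed extreme order statistics; the resulting failure probability is the $\exp(-cn^{1/2-\rho})$ term. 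Assembling the three pieces yields $\zeta_1^*(\epsilon)$.

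For the covariance bound \eqref{eq:sigmahat} I would write $\hat\sigma_{jk}=n^{-1}\sum_i\hat h_j(X_j^i)\hat h_k(X_k^i)-\hat\mu_j\hat\mu_k$ and expand each factor as truth-plus-error exactly as above, so that $\hat\sigma_{jk}-\sigma_{jk}$ splits into a centered i.i.d.\ product average, giving $c\exp(-cn\epsilon^2)$, plus bilinear terms in the two estimation errors. The central-band part now pairs two Lipschitz factors and yields the $c\exp\{-c(\log^2 n)n^{1-2\rho}\epsilon^2/\rho^2\}$ term, while the tail part requires controlling both the number of tail indices, whose Binomial concentration gives $c\exp(-cn^{1-\rho})$, and the squared transformed order statistics $n^{-1}\sum_{\text{tail}}Z_i^2$, which is precisely where the constraint $\rho<1/3$ enters and the $c\exp(-cn^{1/3-\rho})$ term is generated. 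Cauchy--Schwarz together with the first-order deviation bounds from the mean analysis dispatches the remaining cross terms, producing $\zeta_2^*(\epsilon)$.

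The main obstacle in both parts is the unbounded derivative of $\Phi^{-1}$ at the endpoints: the whole proof is a bias--variance trade-off in the single tuning parameter $\delta_n=n^{-\rho}$, balancing the $n^{\rho}$-sized Lipschitz constant on the central band against the probability mass and the extreme-value magnitude in the tails. Getting the exponents $n^{1-2\rho}$ and $n^{1/2-\rho}$ (and $n^{1/3-\rho}$ together with the $\log^2 n$ factor for the covariance) to come out sharply requires careful quantitative control of the transformed extreme order statistics and of the Winsorization cap, and this is the step I expect to be the most delicate.
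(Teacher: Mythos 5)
Your skeleton is essentially the paper's: split $\hat\mu_j-\mu_j$ into the i.i.d.\ sampling fluctuation (Chernoff, giving the $2\exp(-cn\epsilon^2)$ term) plus the estimation error of $\hat h_j$; control the estimation error on a $\rho$-tuned central band via the mean value theorem for $\Phi^{-1}$ together with DKW (your uniform-scale band $[\delta_n,1-\delta_n]$ with $\delta_n=n^{-\rho}$ is exactly the paper's $A_n=[-(\gamma_1\log n)^{1/2},(\gamma_1\log n)^{1/2}]$ with $\gamma_1=2\rho$, and your Lipschitz factor $n^{\rho}$ is their $M_n$); and reduce $\hat\sigma_{jk}$ to cross terms of the form $n^{-1}\sum_i h_j(X_j^i)\{\hat h_k(X_k^i)-h_k(X_k^i)\}$. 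All of that part is sound and parallels Lemma 3 and the first half of the paper's argument.

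The genuine gap is in the tail step, precisely the one you flag as most delicate. With a single tail band $\{U_i\notin[\delta_n,1-\delta_n]\}$, the tail count is of order $n^{1-\rho}$, and the only way to cap the unbounded true values $|Z_i|$ with an exponentially small failure probability is a max-type bound $\Pr(\max_i|Z_i|>n^{\gamma})\le 2n\exp(-n^{2\gamma}/2)$ (a $\sqrt{\log n}$-level cap on $\max_i|Z_i|$ fails only with polynomially small probability, which is useless here). For the deterministic product $(\#\,\mathrm{tail}/n)\times n^{\gamma}\approx n^{\gamma-\rho}$ to vanish you need $\gamma<\rho$, so this route certifies at best a failure exponent $n^{2\gamma}<n^{2\rho}$, strictly weaker than the claimed $\exp(-cn^{1/2-\rho})$ whenever $\rho<1/6$; the obstruction is worse for the covariance, where tail magnitudes enter as products $|Z_{ij}||Z_{ik}|$. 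What the paper does, and what is missing from your sketch, is a further stratification of the tail into three layers: $B_n$ with magnitudes at most $\gamma_2\log n$ (count up to $n^{\alpha_1}$, $\alpha_1<1$, is harmless because the cap is poly-log); $C_n$ with magnitudes at most $n^{\gamma_3}$, whose occupation probability is super-polynomially small ($\le n^{-\gamma_2^2\log n/2}$), so Hoeffding bounds its count by $n^{\alpha_2}$ for any $\alpha_2>1/2$ with failure probability $\exp(-n^{2\alpha_2-1}/4)$ --- this is the key device that decouples the count in the dangerous region from $n^{1-\rho}$; and $D_n$ (magnitudes above $n^{\gamma_3}$), empty with probability at least $1-2n^{1-\gamma_3}\exp(-n^{2\gamma_3}/2)$. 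Optimizing the constraints $\alpha_2+\gamma_3<1$ (mean) and $\alpha_2+2\gamma_3<1$ (covariance) over $(\alpha_2,\gamma_3)$ is exactly what produces the exponents $n^{1/2-\rho}$ and $n^{1/3-\rho}$ and the restriction $\rho<1/3$; your proposal asserts these exponents but supplies no mechanism that can reach them. (An alternative fix, not in the paper, would be Bernstein-type concentration for the truncated sums $n^{-1}\sum_i|Z_i|1\{|Z_i|>t_n\}$ and $n^{-1}\sum_iZ_i^21\{|Z_i|>t_n\}$, but you would need to invoke it explicitly.)
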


For the two-class SeLDA model, we can easily obtain the following
corollary from Theorem~\ref{accuracy}.

\begin{corollary}\label{twoclass}
Define
\begin{eqnarray}
\zeta_1(\epsilon)&=&\zeta_1^*(\pi_+^{1/2}\epsilon/2)+\zeta_1^*(\pi_-^{1/2}\epsilon/2)+4\exp(-cn)\label{SSeLDA:mu}\\
\zeta_2(\epsilon)&=&\zeta_2^*(\pi_+^{1/2}\epsilon/2)+\zeta_2^*(\pi_-^{1/2}\epsilon/2)+4\exp(-cn)+2\zeta_1(\epsilon)\label{SeLDA:sigma}
\end{eqnarray}
Then there exists a positive constant $\epsilon_0$ such that, for any $0<\epsilon<\epsilon_0$, we have
\begin{eqnarray*}
&&\Pr(|(\hat \mu_{+j}-\hat \mu_{-j})-(\mu_{+j}-\mu_{-j})|>\epsilon)\le \zeta_1(\epsilon)\label{eq:dmuhat}\\
&&\Pr(|\hat \sigma_{jk}-\sigma_{jk}|>\epsilon)\le
\zeta_2(\epsilon)\label{eq:Chat}
\end{eqnarray*}
\end{corollary}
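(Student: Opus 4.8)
The plan is to reduce the two-class bounds to the one-class bounds of Theorem~\ref{accuracy} applied separately within each class, and then to pay a small additive price for the randomness of the class sizes. Conditionally on the label vector $(Y^1,\dots,Y^n)$, the observations with $Y^i=+1$ form an i.i.d.\ sample of size $n_+$ from the within-class model $h(X)\mid Y=+1\sim N(0,\Sigma)$, and the observations with $Y^i=-1$ form an i.i.d.\ sample of size $n_-$ from $N(\mu_-,\Sigma)$. In each class the marginal transformation is of the form $\Phi^{-1}\circ F$ and the Winsorization thresholds are $(1/n_\pm^2,1-1/n_\pm^2)$, so the hypotheses of Theorem~\ref{accuracy} hold verbatim. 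Applying the theorem inside each class gives, for the positive class, $\pr(|\hat\mu_{+j}-\mu_{+j}|>t)\le\zeta_1^*(t)$ and $\pr(|\hat\sigma_{jk}^{(+)}-\sigma_{jk}|>t)\le\zeta_2^*(t)$ with $n$ replaced by $n_+$, and analogously for the negative class with $n_-$.

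For the mean difference I would decompose
\[
(\hat\mu_{+j}-\hat\mu_{-j})-(\mu_{+j}-\mu_{-j})=(\hat\mu_{+j}-\mu_{+j})-(\hat\mu_{-j}-\mu_{-j}),
\]
so that $\{\,|\cdot|>\epsilon\,\}\subseteq\{|\hat\mu_{+j}-\mu_{+j}|>\epsilon/2\}\cup\{|\hat\mu_{-j}-\mu_{-j}|>\epsilon/2\}$ and a union bound produces two $\zeta_1^*(\epsilon/2)$-type terms. To convert the $n_\pm$-rates into $n$-rates I would work on the high-probability event $\mathcal G=\{n_+\ge(1-\delta)\pi_+ n\}\cap\{n_-\ge(1-\delta)\pi_- n\}$ for a fixed small $\delta$. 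Since $n_+\sim\mathrm{Binomial}(n,\pi_+)$, a Chernoff bound gives $\pr(\mathcal G^c)\le 4\exp(-cn)$, which is precisely the additive $4\exp(-cn)$ in $\zeta_1$. On $\mathcal G$ one has $\exp(-c n_\pm t^2)\le\exp(-c(1-\delta)\pi_\pm n t^2)$, and absorbing $(1-\delta)$ into the generic constant rewrites the positive-class term as $\zeta_1^*$ evaluated at $\pi_+^{1/2}\epsilon/2$ and the negative-class term at $\pi_-^{1/2}\epsilon/2$; the non-leading exponential factors of $\zeta_1^*$ transform the same way because $n_\pm\asymp n$ on $\mathcal G$. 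This reproduces the stated $\zeta_1(\epsilon)$.

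For the covariance I would write the pooled estimator as a convex combination $\hat\sigma_{jk}=\hat\pi_+\hat\sigma_{jk}^{(+)}+\hat\pi_-\hat\sigma_{jk}^{(-)}$, so that, using $\hat\pi_++\hat\pi_-=1$,
\[
\hat\sigma_{jk}-\sigma_{jk}=\hat\pi_+(\hat\sigma_{jk}^{(+)}-\sigma_{jk})+\hat\pi_-(\hat\sigma_{jk}^{(-)}-\sigma_{jk}),
\]
whence $|\hat\sigma_{jk}-\sigma_{jk}|\le|\hat\sigma_{jk}^{(+)}-\sigma_{jk}|+|\hat\sigma_{jk}^{(-)}-\sigma_{jk}|$. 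The same $\epsilon/2$ split, the within-class application of Theorem~\ref{accuracy}, and the passage to $\mathcal G$ then deliver the $\zeta_2^*(\pi_+^{1/2}\epsilon/2)+\zeta_2^*(\pi_-^{1/2}\epsilon/2)+4\exp(-cn)$ part of $\zeta_2$. The remaining $2\zeta_1(\epsilon)$ accounts for the mean-estimation error propagated into the centered second moments: because the negative-class contribution is centered at the estimated mean $\hat\mu_{-j}$ (rather than $\mu_{-j}$), expanding the centered cross-products leaves remainder terms of the form $(\hat\mu_{-j}-\mu_{-j})(\cdot)$, one for coordinate $j$ and one for coordinate $k$, whose deviations are exactly controlled by the mean bound $\zeta_1(\epsilon)$.

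The step I expect to require the most care is the transference of the fixed-$n$ bounds of Theorem~\ref{accuracy} to the random within-class sample sizes $n_\pm$: one must check that conditioning on the label vector preserves the i.i.d.\ structure that Theorem~\ref{accuracy} exploits, and that \emph{every} exponential rate appearing in $\zeta_1^*$ and $\zeta_2^*$ — not merely the leading $\exp(-cn t^2)$ factor — survives the substitution $n\mapsto n_\pm$ on $\mathcal G$ with the advertised $\pi_\pm^{1/2}$ scaling, after absorbing the constants $1-\delta$ and $\pi_\pm$ into the generic $c$. Everything else is bookkeeping with union bounds and the triangle inequality.
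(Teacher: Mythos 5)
Your argument is essentially the paper's own: reduce to the one-class bounds of Theorem~1 applied within each class, split by the triangle inequality and a union bound, and use a Chernoff bound on the binomial class sizes to replace $n_\pm$ by a constant multiple of $\pi_\pm n$, which yields the $\pi_\pm^{1/2}$ rescaling inside $\zeta_1^*,\zeta_2^*$ and the additive $4\exp(-cn)$. The paper's proof is just a terser version of this (it uses the event $n_+>\pi_+n/2$ where you use $n_\pm\ge(1-\delta)\pi_\pm n$, and dispatches the covariance bound with ``can be proven similarly''), so no substantive difference.
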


\begin{remark} {\it
Theorem~\ref{accuracy} and Corollary 1 can
be used for other high-dimensional statistical problems involving
semiparametric Gaussian copulas.}
\end{remark}


\subsection{Consistency of SSDA} 
In this section, we study the theoretical results for SSDA. For simplicity, we focus on SSDA with the naive estimator of transformation functions. The theoretical properties for SSDA combined with the pooled transformation estimator can be derived similarly with more lengthy calculation of the probability bounds.

With the results in Section 4.1, we are ready to prove the rate of convergence of
SSDA. We first define necessary notation. Define
$\beta^{\star}=\Omega^{-1}(\mu_+-\mu_-)$, where $\Omega$ is the covariance of $X$. Recall that $\beta^{\star}$ is
equal to $c\Sigma^{-1}(\mu_+-\mu_-)=c\beta^{\mathrm{Bayes}}$ for
some positive constant \citep{SLDA}. Then we can write $A=\{j:
\beta_j^{\star} \neq 0\}$. Let $s$ be the cardinality of $A$. In
addition, for an $m_1\times m_2$ matrix $M$, denote $\Vert M
\Vert_{\infty}=\max_{i=1,..,m_1} \sum^{m_2}_{j=1} |M_{ij}|$, and,
for a vector $u$, $\Vert u \Vert_{\infty}=\max |u_j|$. Throughout the proof, we assume that $s\ll n^{1/4}$. Define the
following quantities that are repeatedly used:
\begin{eqnarray*} &&\kappa=\Vert
\Omega_{A^cA}(\Omega_{AA})^{-1}\Vert_{\infty}, \quad \varphi=\Vert
(\Omega_{AA})^{-1} \Vert_{\infty},\Delta=\Vert \dma\Vert_{\infty},\\
&& \Delta_1=\Vert \dma\Vert_1, \quad \Delta_2=\Vert\mu_{+A}+\mu_{-A}\Vert_{\infty}
\quad \nu =\min_{j \in A}|\beta_j|/\Delta \varphi .
\end{eqnarray*}

 Suppose that the lasso estimator correctly shrinks
$\hat \beta_{A^c}$ to zero, then SSDA should be equivalent to
performing SeLDA on $X_A$. Therefore, define the hypothetical
estimator
\begin{equation*}\label{hyplasso}
\hat
\beta_{A}^{\mathrm{hyp}}=\arg\min_{\beta,\beta_0}[n^{-1}\sum^n_{i=1}\{Y^i-\beta_0-\sum_{j
\in A}\hat h_j(X^i_j)\beta_j\}^2+\sum_{j \in A} \lambda
|\beta_j|].
\end{equation*}
Then, we wish that $\hat \beta=(\hat
\beta_A^{\mathrm{hyp}},\mathrm{0}_{A^c})$ with $\hat \beta^{\mathrm{hyp}}_j\ne 0$
for $j \in A$. To ensure the consistency of SSDA, we further
require the following condition:
\begin{equation}\label{lasso-condition}
\quad \kappa=\Vert \Omega_{A^cA}(\Omega_{AA})^{-1} \Vert_{\infty} <1.
\end{equation}
The condition in (\ref{lasso-condition}) is an analogue of the
ir-representable condition for the lasso penalized linear
regression model
\citep{Nicolai06,Zou_2006,Yu,Martin09}. Weaker conditions exist if one is only concerned with oracle inequalities for the coefficients under the regression model, such as the restricted eigenvalue condition \citep{BRT09,vB09,RWY10}. It would be interesting to investigate if similar conditions can be extended to the framework of SSDA. We leave this as a future project. On the other hand, if one is reluctant to assume \eqref{lasso-condition}, the use of a concave penalty, such as SCAD \citep{FL_2001}, can remove this condition; see the discussion in \cite{SLDA}.

\begin{theorem}\label{thm1}
Define $\zeta_1,\zeta_2$ as in Corollary~\ref{twoclass}. Pick any $\lambda$ such that $\lambda <
{\min\{\min_{j\in A}{|\beta_j|}/(2\varphi),\Delta\}}.$ Then for any $\epsilon>0$ and sufficiently large $n$ such that $\epsilon> csn^{-\rho/2}$, where $c$ does not depend on $(n,p,s)$, we have

1. Assuming the condition in (\ref{lasso-condition}), with
probability at least $1-\psi_1$, $\hat \beta_A = \hat
\beta_A^{\mathrm{hyp}}$ and $\hat \beta_{A^c}=0$, where
\begin{equation*}\label{thm1.1}
\psi_1=2ps\zeta_2(\epsilon/s)+2p\zeta_1[
\lambda(1-\kappa-2\epsilon \varphi)/\{4(1+\kappa)\}]
\end{equation*}
and $\epsilon$ is any positive constant less than {$\min\left(
\epsilon_0,
\lambda(1-\kappa)/[4\varphi\left\{\lambda/2+(1+\kappa)\right\}\Delta]\right)$}.

2. With probability at least $1-\psi_2$, none of the elements of
$\hat \beta_A$ is zero, where
\begin{equation*}\label{thm2.1}
\psi_2=2s^2\zeta_2(\epsilon/s)+2s\zeta_1(\epsilon)
\end{equation*}
and $\epsilon$ is any positive constant less than
{$\min\left\{\epsilon_0,\nu/\{ (3+\nu)\varphi\}, \Delta
\nu/( 6+2\nu) \right\}$.}

3. For any positive $\epsilon$ satisfying
$\epsilon<\min\left\{\epsilon_0,\lambda/( 2\varphi\Delta),
\lambda\right\}$, we have
\begin{eqnarray*}\label{thm2.2} \Pr(\Vert \hat
\beta_A-\beta_A\Vert_{\infty} \le 4\varphi\lambda) & \ge &
1-2s^2\zeta_2(\epsilon/s)-2s\zeta_1(\epsilon).
\end{eqnarray*}

\end{theorem}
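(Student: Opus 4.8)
The plan is to prove all three parts by the primal--dual witness technique for the lasso, reducing every statement to the exponential concentration bounds of Corollary~\ref{twoclass}. The backbone is a single \emph{good event} on which the sample covariance of the pseudo-features and the sample mean differences are uniformly close to their population targets. Concretely, I would first write the stationarity (KKT) conditions of the penalized problem (\ref{SSeLDA}), after profiling out the intercept, in the form $\hat\Omega\hat\beta+\lambda\hat\gamma=\hat\delta$, where $\hat\Omega$ is the sample covariance of $\hat h(X)$, $\hat\delta$ is proportional to the vector of sample mean differences $\hat\mu_{+j}-\hat\mu_{-j}$, and $\hat\gamma\in\partial\Vert\hat\beta\Vert_1$; the population identity $\Omega\beta^{\star}=\delta$, with $\delta\propto\dma$ on $A$, defines $\beta^{\star}$. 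Each entry of $\hat\Omega-\Omega$ is controlled by $\zeta_2$ and each entry of $\hat\delta-\delta$ by $\zeta_1$, so a union bound converts per-entry deviations into the block matrix/vector deviations the argument needs. Because an $\ell_\infty$-operator-norm deviation of an $s$-column block of $\hat\Omega$ aggregates $s$ entries per row, I would demand each entry deviate by at most $\epsilon/s$, which is exactly why $\zeta_2(\epsilon/s)$ and the combinatorial factors $s^2$ and $ps$ surface in $\psi_1,\psi_2$. I would present the three parts in the logical order 3, 2, 1, since Part~2 rests on the $\ell_\infty$ bound of Part~3 and Part~1 needs the sign information from Part~2.

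For Part~3 I would analyze the hypothetical estimator $\hat\beta_A^{\mathrm{hyp}}$ directly. Subtracting its KKT equation from the population identity on $A$ gives $\hat\Omega_{AA}(\hat\beta_A^{\mathrm{hyp}}-\beta_A)=(\hat\delta_A-\delta_A)-(\hat\Omega_{AA}-\Omega_{AA})\beta_A-\lambda\hat\gamma_A$. On the good event $\hat\Omega_{AA}$ is invertible with $\Vert(\hat\Omega_{AA})^{-1}\Vert_\infty$ within a constant of $\varphi$, so taking $\ell_\infty$ norms and using $\Vert\hat\gamma_A\Vert_\infty\le 1$ yields a bound of the shape $\varphi(\text{mean error}+\text{covariance error}\cdot\Delta+\lambda)$. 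The stated range $\epsilon<\min\{\epsilon_0,\lambda/(2\varphi\Delta),\lambda\}$ is calibrated precisely so that the two noise terms are each dominated by $\lambda$, collapsing the bound to $4\varphi\lambda$; the factor $4$ is the slack that absorbs the constants from inverting $\hat\Omega_{AA}$ and from the triangle inequalities, and the probability $1-2s^2\zeta_2(\epsilon/s)-2s\zeta_1(\epsilon)$ is the union bound over the $s^2$ covariance entries of the $A$-block and the $s$ mean-difference coordinates on $A$.

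Part~2 then follows immediately. Since $\min_{j\in A}|\beta_j|=\nu\Delta\varphi$ by the definition of $\nu$, if the $\ell_\infty$ estimation error of Part~3 is strictly below this minimum signal, then no coordinate of $\hat\beta_A$ can vanish. Tracking the constants through the Part~3 bound converts this requirement into the explicit threshold $\epsilon<\min\{\epsilon_0,\nu/[(3+\nu)\varphi],\Delta\nu/(6+2\nu)\}$, and the failure probability is again $\psi_2=2s^2\zeta_2(\epsilon/s)+2s\zeta_1(\epsilon)$.

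The crux is Part~1, the strict dual-feasibility check on $A^c$. Having set $\hat\beta=(\hat\beta_A^{\mathrm{hyp}},\mathrm{0}_{A^c})$ and used Part~2 to identify $\hat\gamma_A=\sign(\beta_A)$, I would verify that the loss gradient restricted to $A^c$ has $\ell_\infty$ norm strictly below $\lambda$. Substituting the stationarity on $A$, this gradient equals $\hat\Omega_{A^cA}(\hat\Omega_{AA})^{-1}$ applied to $\lambda\,\sign(\beta_A)$ plus residual terms built from $\hat\delta-\delta$ and $\hat\Omega-\Omega$. In the population the leading product reduces to $\kappa<1$ by the irrepresentable condition (\ref{lasso-condition}); the entire difficulty is to show the sample version still leaves a positive margin. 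This requires propagating the entrywise $\zeta_2(\epsilon/s)$ deviations through both the inverse $(\hat\Omega_{AA})^{-1}$ and the cross-block $\hat\Omega_{A^cA}$ \emph{uniformly over all $p-s$ off-support coordinates}, without the accumulated constants swallowing the gap $1-\kappa$. The admissible range $\epsilon<\lambda(1-\kappa)/[4\varphi\{\lambda/2+(1+\kappa)\}\Delta]$ is exactly the budget keeping the perturbed product below $1$, and the union bound over $p$ coordinates, each needing $s$ covariance entries, produces the $2ps\,\zeta_2(\epsilon/s)+2p\,\zeta_1(\cdot)$ form of $\psi_1$. I expect this step---bounding a product of two perturbed sample blocks uniformly in $p$ while preserving the irrepresentable margin---to be the main obstacle; it is where the growth conditions $s\ll n^{1/4}$ and $\epsilon>csn^{-\rho/2}$ are essential, since they guarantee that $\zeta_2(\epsilon/s)$ remains exponentially small after the $s$-fold rescaling and the $p$-fold union bound.
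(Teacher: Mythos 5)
Your proposal is, in substance, the paper's own proof: the paper dispatches Theorem~\ref{thm1} in two lines by observing that the DSDA analysis of \cite{SLDA} depends on the data only through entrywise accuracy of the sample means and covariances, and then invoking Corollary~\ref{twoclass} to supply those accuracy bounds for the pseudo-data --- which is exactly the primal--dual witness argument with union bounds over the $\zeta_1,\zeta_2$ deviation events that you reconstruct in detail (including the $\epsilon/s$ rescaling that produces the $s^2$ and $ps$ factors). The only differences are organizational, e.g.\ your ordering of the parts and your use of Part~2's sign identification inside Part~1's dual-feasibility check, which the standard witness argument does not actually require; these do not change the substance.
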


Theorem \ref{thm1} provides the foundation for asymptotic results.
Assume the following two regularity conditions.

(C1). $n,p \rightarrow \infty$ and
$s^2\log(ps)/n^{1/3-\rho}\rightarrow 0$, for some $\rho$ in
$(0, 1/3)$;

(C2).  $\min_{j\in A}|\beta_{j}|
\gg\max[sn^{-\rho/2},s\{\log(ps)/n^{1/3-\rho}\}^{1/2}]$ for some $\rho$ in
$(0, 1/3).$

Condition (C1) restricts that $p,s$ should not grow too fast
comparing to $n$. However, $p$ is allowed to grow faster than any polynomial order of $n$.
Condition (C2) states that the important features
should be sufficiently large such that we can separate them from the
noises, which is a standard assumption in the literature of sparse recovery. The next theorem shows that
SSDA consistently recovers the Bayes rule of the SeLDA model.

\begin{theorem}\label{asy}
 Let
$\widehat A=\{j: \hat \beta_j \neq 0\}$. Under conditions (C1) and
(C2), if we choose $\lambda=\lambda_n$ such that $\lambda_n \ll
\min_{j\in A}|\beta_{j}|$ and $\lambda_n \gg
s\{\log(ps)/n^{\frac{1}{3}-\rho}\}^{1/2}$, and further
assume $\kappa <1$, then $\Pr(\widehat A =A) \rightarrow 1$ and
$\Pr\left(\Vert \hat \beta_A-\beta_A\Vert_{\infty} \le 4\varphi
\lambda_n \right) \rightarrow 1$.

\end{theorem}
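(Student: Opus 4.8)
The plan is to read Theorem~\ref{asy} off the finite-sample Theorem~\ref{thm1}, whose three parts already isolate exactly the events we need, and the two conclusions decouple. For the selection consistency $\Pr(\widehat A=A)\to1$, observe that on the event of Part~1 we have $\hat\beta_{A^c}=0$, hence $\widehat A\subseteq A$, while on the event of Part~2 every coordinate of $\hat\beta_A$ is nonzero, hence $A\subseteq\widehat A$; intersecting the two events forces $\widehat A=A$. A union bound then gives $\Pr(\widehat A\ne A)\le\psi_1+\psi_2$, so it suffices to prove $\psi_1\to0$ and $\psi_2\to0$. The $\ell_\infty$ bound is Part~3 verbatim, so for it I only need the complementary probability $2s^2\zeta_2(\epsilon/s)+2s\zeta_1(\epsilon)$ to vanish.

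The second step is to choose, separately for each part, a vanishing sequence $\epsilon=\epsilon_n$ inside the admissible window demanded by Theorem~\ref{thm1}: above the floor $csn^{-\rho/2}$ and below the relevant ceiling, which is of order $\lambda_n$ in Parts~1 and~3 and of order $\nu=\min_{j\in A}|\beta_j|/(\Delta\varphi)$ (up to saturation) in Part~2. Condition (C2) is what makes the window nonempty: it forces $\min_{j\in A}|\beta_j|$, and through the prescribed range $s\{\log(ps)/n^{1/3-\rho}\}^{1/2}\ll\lambda_n\ll\min_{j\in A}|\beta_j|$ also $\lambda_n$, to dominate the floor, so that a choice of $\epsilon_n$ proportional to the ceiling (down to a fixed constant factor) lies strictly between floor and ceiling for all large $n$ while staying below $\epsilon_0$.

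The third and most tedious step is to substitute these $\epsilon_n$ into the definitions of $\zeta_1$ and $\zeta_2$ from Corollary~\ref{twoclass} and to check that each of the finitely many exponential summands, once multiplied by its polynomial prefactor (at worst $ps$ in $\psi_1$ and $s^2$ in $\psi_2$ and Part~3), tends to $0$; this reduces to comparing $\log(ps)$ with each exponent. The $\lambda_n$-dependent piece of $\psi_1$, namely $2p\zeta_1[\lambda_n(1-\kappa-2\epsilon_n\varphi)/\{4(1+\kappa)\}]$, is governed by its slowest term $p\exp(-cn^{1-2\rho}\lambda_n^2)$, which vanishes because the prescribed lower bound on $\lambda_n$ even exceeds $\{\log p/n^{1-2\rho}\}^{1/2}$, using $n^{1/3-\rho}\ll n^{1-2\rho}$. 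The remaining summands all flow through $\zeta_2(\epsilon_n/s)$ and are dominated by the single $\epsilon$-free dimension term $ps\exp(-cn^{1/3-\rho})$, which decays precisely under (C1); every other exponent ($n^{1-\rho}$, $n^{1/2-\rho}$, $n$, and the $\log^2 n$-weighted term) is faster and therefore harmless.

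Assembling these estimates yields $\psi_1+\psi_2\to0$ and hence $\Pr(\widehat A=A)\to1$, and the identical accounting applied to Part~3 gives $\Pr(\Vert\hat\beta_A-\beta_A\Vert_\infty\le4\varphi\lambda_n)\to1$. The step I expect to be the real obstacle is the bookkeeping of the window in the second step: producing a single sequence $\epsilon_n$ that clears the floor $csn^{-\rho/2}$ required for Theorem~\ref{thm1} to apply, stays under the $\lambda_n$- and $\nu$-sized ceilings, and is simultaneously large enough that $n\epsilon_n^2/s^2\gg\log(ps)$ annihilates the $ps\exp(-cn(\epsilon_n/s)^2)$ term. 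Reconciling the floor against the $\lambda_n$-ceiling is exactly where the prescribed range of $\lambda_n$ must be weighed against (C2), and care is needed because the window itself shrinks with $n$; once it is pinned down, the term-by-term exponential comparisons of the third step are routine.
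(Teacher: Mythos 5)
Your proposal is correct and follows essentially the same route as the paper: the paper's own ``proof'' consists of the single remark that Theorem~\ref{asy} is a direct consequence of Theorem~\ref{thm1} (with all details omitted), and your intersection of the events from Parts~1 and~2, the union bound $\psi_1+\psi_2$, and the choice of vanishing $\epsilon_n$ inside the admissible window are exactly the omitted bookkeeping. You have also correctly identified the one genuinely delicate point --- verifying that $\epsilon_n$ can simultaneously clear the floor $csn^{-\rho/2}$ and stay under the $\lambda_n$- and $\nu$-sized ceilings, which is where (C2) and the prescribed range of $\lambda_n$ enter.
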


Further, we prove that SSDA is asymptotically equivalent to the Bayes rule in terms of error rate . Note that the Bayes error rate $R=\Pr(Y\ne \sign(h(X)^\T\beta^*+\beta_0))$ and $R_n=\Pr(Y\ne \sign(\hat h(X)^\T\hat\beta+\hat\beta_0))$. We have the following theorem.

\begin{theorem}\label{error}
Define $\zeta_1,\zeta_2$ as in Corollary~\ref{twoclass}. Pick any $\lambda$ such that $\lambda <
{\min\{\dfrac{\min_{j\in A}{|\beta_j|}}{2\varphi},\Delta\}}.$ Then for a sufficiently small constant $\epsilon>0$ and sufficiently large $n$ such that $\epsilon>csn^{-\rho/2}$, where $c$ does not depend on $(n,p,s)$, with probability no smaller than $1-\psi_3$, we have $R_n-R<\epsilon$, where
\begin{equation}
\psi_3= cs\zeta_1(\dfrac{\epsilon}{s(\phi\Delta_1+\Delta_2)})+cp\zeta_1(\dfrac{\lambda(1-\kappa+2\epsilon\phi)}{4(1+\kappa)})+2ps\zeta_2(\dfrac{c\epsilon}{s})+cp\exp(-c\dfrac{n^{1-\rho}}{\rho\log{n}}).
\end{equation}
\end{theorem}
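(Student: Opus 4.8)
\emph{Proof plan.} The backbone of the argument is that any linear rule built from the \emph{true} transformation has a closed-form Gaussian error rate. For a direction $b$ and intercept $b_0$, since $h(X)\mid Y=y\sim N(\mu_y,\Sigma)$, the discriminant $h(X)^\T b+b_0$ is conditionally Gaussian, so the rule $\sign\{h(X)^\T b+b_0\}$ has risk
\begin{equation*}
R(b,b_0)=\pi_+\Phi\left(\frac{-(\mu_+^\T b+b_0)}{\sqrt{b^\T\Sigma b}}\right)+\pi_-\Phi\left(\frac{\mu_-^\T b+b_0}{\sqrt{b^\T\Sigma b}}\right),
\end{equation*}
and $R=R(\beta^{\star},\beta_0)$ is the Bayes risk. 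The estimated risk $R_n$ is \emph{not} of this form, because $\hat h(X)\mid Y$ is no longer exactly Gaussian. I would therefore split
\begin{equation*}
R_n-R=\{R_n-R(\hat\beta,\hat\beta_0)\}+\{R(\hat\beta,\hat\beta_0)-R(\beta^{\star},\beta_0)\},
\end{equation*}
calling the first bracket (A), the transformation gap, and the second (B), the coefficient gap. Throughout I work on the event of Theorem~\ref{thm1}, which gives $\hat\beta_{A^c}=0$ and $\Vert\hat\beta_A-\beta_A\Vert_{\infty}\le 4\varphi\lambda$; its failure probability contributes the terms $cp\,\zeta_1\{\lambda(1-\kappa+2\epsilon\phi)/(4(1+\kappa))\}$ and $2ps\,\zeta_2(c\epsilon/s)$ of $\psi_3$.

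Term (B) compares two copies of the smooth closed-form risk $R(\cdot,\cdot)$ evaluated at $(\hat\beta,\hat\beta_0)$ and $(\beta^{\star},\beta_0)$, both using the true $h$. Since $\Phi$ has derivative $\phi$ bounded by $(2\pi)^{-1/2}$ and $\mu_\pm,\Sigma$ are fixed, $R(\cdot,\cdot)$ is Lipschitz on the relevant region; after support recovery restricts everything to $A$, (B) is governed by $\Vert\hat\beta_A-\beta_A\Vert_{\infty}$, by $\vert\hat\beta_0-\beta_0\vert$, and by the constants $\varphi,\kappa$. The coefficient part is $O(s\varphi\lambda)$, which lies below the $\epsilon$-budget once $\epsilon>csn^{-\rho/2}$ and $\lambda$ is chosen as prescribed. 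The intercept error $\hat\beta_0-\beta_0$ is expanded through its defining formula and each factor $\hat\mu_\pm,\hat\Sigma,\hat\pi_\pm$ is bounded by Corollary~\ref{twoclass}; the mean magnitudes $\Delta_1=\Vert\dma\Vert_1$ and $\Delta_2=\Vert\mu_{+A}+\mu_{-A}\Vert_{\infty}$ together with the density factor $\phi$ measuring the threshold sensitivity of $R$ produce the first term $cs\,\zeta_1\{\epsilon/(s(\phi\Delta_1+\Delta_2))\}$.

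Term (A) is the crux. Writing $\tilde D(X)=h(X)^\T\hat\beta+\hat\beta_0$ and $\hat D(X)=\hat h(X)^\T\hat\beta+\hat\beta_0$, the sign-disagreement inequality gives, for each class $y$, $\Pr(y\hat D<0\mid Y=y)-\Pr(y\tilde D<0\mid Y=y)\le\Pr(|\tilde D|\le|\hat D-\tilde D|\mid Y=y)$, where $\hat D-\tilde D=(\hat h(X)-h(X))^\T\hat\beta$ reduces to a sum over $A$ by support recovery. I would localize $X$ to a typical set $\mathcal{T}=\{x:\max_{j\in A}|h_j(x_j)|\le\tau\}$ with $\tau\asymp\sqrt{\log n_+}$ matched to the Winsorization level $1/n_+^2$; on $\mathcal{T}$ the uniform transformation consistency behind Theorem~\ref{accuracy} bounds $\max_{j\in A}|\hat h_j(X_j)-h_j(X_j)|$ by some $\delta_n\to0$, so $|\hat D-\tilde D|\le\Vert\hat\beta_A\Vert_1\delta_n$, and the bounded Gaussian density of $\tilde D\mid Y=y$ yields the anti-concentration bound $\Pr(|\tilde D|\le\Vert\hat\beta_A\Vert_1\delta_n)\lesssim\Vert\hat\beta_A\Vert_1\delta_n/\sqrt{\hat\beta_A^\T\Sigma_{AA}\hat\beta_A}$, again below $\epsilon$. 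The remaining contribution, $\Pr(X\notin\mathcal{T})$ together with the probability that uniform consistency fails at the Winsorization scale, is handled by a union bound over the $p$ coordinates and yields the last term $cp\exp\{-cn^{1-\rho}/(\rho\log n)\}$.

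I expect (A) to be the main obstacle. The difficulty is that $\hat h_j$ is bounded by Winsorization while $h_j$ is unbounded, so $\hat h_j-h_j$ cannot be controlled uniformly over all of $\mathbb{R}$; one must simultaneously grow the typical set $\mathcal{T}$ with $n$, control the exploding derivative of $\Phi^{-1}$ near the Winsorization boundary when converting the empirical-CDF error into a sup-norm bound on $\hat h_j-h_j$, and trade this against the anti-concentration of the Gaussian discriminant so that the band probability stays $o(\epsilon)$. Balancing these three effects is what fixes the admissible range $\epsilon>csn^{-\rho/2}$ and the exponent $n^{1-\rho}/(\rho\log n)$. Finally, intersecting the events of Theorem~\ref{thm1}, Corollary~\ref{twoclass}, the uniform transformation bound, and $\{X\in\mathcal{T}\}$, and applying a union bound, makes the complementary probabilities sum to $\psi_3$; on this intersection $(\mathrm{A})+(\mathrm{B})<\epsilon$, giving $R_n-R<\epsilon$.
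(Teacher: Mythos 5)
Your handling of term (A) is, in substance, the paper's own argument: the paper bounds $R_n-R$ by $\Pr\{\sign(\hat h(X)^\T\hat\beta+\hat\beta_0)\ne\sign(h(X)^\T\beta+\beta_0)\}$, splits this into the small-margin event $\{|h(X)^\T\beta+\beta_0|\le\epsilon\}$ (Gaussian anti-concentration, cost $c\epsilon$) and the large-perturbation event, and controls the perturbation on the good training events by the triangle inequality $|\hat\beta_0-\beta_0|+\Vert\hat h(X_A)\Vert_{\infty}\Vert\hat\beta_A-\beta_A\Vert_1+\Vert\beta_A\Vert_1\Vert\hat h(X_A)-h(X_A)\Vert_{\infty}$, using $\Vert\hat h\Vert_{\infty}\le 2\sqrt{\log n}$ from the Winsorization, Lemma~\ref{A} with $\gamma_1=1/2$ for the uniform transformation error on $A_n$, and the typical-set probability $csn^{-1/4}/\sqrt{\log n}<\epsilon$; this is exactly your $\mathcal{T}$-localization, and it yields the same $cp\exp\{-cn^{1-\rho}/(\rho\log n)\}$ term. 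That you insert the intermediate rule $\sign\{h(X)^\T\hat\beta+\hat\beta_0\}$ with its closed-form Gaussian risk, rather than comparing directly to the Bayes discriminant, is a cosmetic reorganization, not a different method.

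The genuine gap is in your term (B). You control the slope error through the event of Theorem~\ref{thm1}, i.e.\ $\Vert\hat\beta_A-\beta_A\Vert_{\infty}\le 4\varphi\lambda$, and assert that the resulting contribution $O(s\varphi\lambda)$ ``lies below the $\epsilon$-budget once\ldots $\lambda$ is chosen as prescribed.'' But the theorem's only constraint on $\lambda$ is the \emph{upper} bound $\lambda<\min\{\min_{j\in A}|\beta_j|/(2\varphi),\Delta\}$, a constant-order threshold that does not shrink with $\epsilon$ or $n$, while $\epsilon$ is an arbitrarily small constant and $s$ may grow. Hence $s\varphi\lambda<\epsilon$ does not follow from the hypotheses; for fixed $\lambda$ and small $\epsilon$ it is simply false. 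This is precisely why the paper does not route the coefficient error through $\lambda$: it proves a separate estimation lemma, $\Pr(\Vert\hat\beta_A-\beta_A\Vert_1\ge\epsilon)\le 2s^2\zeta_2(\epsilon/s)+2s\zeta_1(\epsilon/s)$ for any admissible level $\epsilon$, and applies it at level $\epsilon/\sqrt{\log n}$, alongside the analogous intercept bound (which you do treat correctly via Corollary~\ref{twoclass}); these events are what generate the $2ps\zeta_2(c\epsilon/s)$ and $cs\zeta_1[\epsilon/\{s(\phi\Delta_1+\Delta_2)\}]$ terms of $\psi_3$. Your argument becomes correct once the $4\varphi\lambda$ bound in (B) is replaced by this $\ell_1$-estimation bound at level $\sim\epsilon$; note the same bound is also what justifies the $\Vert\hat\beta_A\Vert_1$ factor you need in the anti-concentration step of (A).
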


\begin{corollary}
Under conditions (C1) and
(C2), if we choose $\lambda=\lambda_n$ such that $\lambda_n \ll
\min_{j\in A}|\beta_{j}|$ and $\lambda_n \gg
\sqrt{\log(ps)\dfrac{s^2}{n^{\frac{1}{3}-\rho}}}$, and further
assume $\kappa <1$, then
\begin{equation}
R_n-R\rightarrow 0 \quad\text{ in probability}
\end{equation}
\end{corollary}
\begin{remark}{\it
Our results concerning the error rate of SSDA are much more involved than those for sparse LDA algorithms in \citet{CL2011, ROAD}, because of the semiparametric assumptions. Under the parametric LDA model, the error rate tends to the Bayes error as long as the discriminant direction $\beta$ is estimated consistently. However, under the SeLDA model, we deal with the extra uncertainty in estimating $h$ and need some uniform convergence results on $\hat h(X)$.}
\end{remark}

\section{Numerical Results}
\subsection{Simulation}

We examine the finite sample performance of SSDA by simulation. We consider two transformation estimators: the naive estimator and the pooled estimator. The resulting methods are denoted by SSDA(naive) and SSDA(pooled), respectively.
For comparison, in the simulation study  we also include DSDA and
the sparse LDA algorithm \citep{WT_2011} denoted by Witten for
presentation purpose. After we apply the estimated transformation to
the data, we use Witten's sparse LDA algorithm to fit the
classifier. This gives us Se-Witten, another competitor in the
simulation study.


Four types of SeLDA models were considered in the study. In each
model, we first generated $Y$ with $\pi_+=\pi_-=0.5$. For convenience, we say that $\Sigma$ has AR($\rho$) structure if $\Sigma_{ij}=\rho^{|i-j|}$ and $\Sigma$ has CS($\rho$) structure if $\Sigma_{ij}=\rho$ for any $i\ne j$.
We fixed $\mu_{-}=0$ and $\mu_{+}=\Sigma \beta^{\mathrm{Bayes}}$.

\begin{enumerate}
\item[] Model 1: $n=150$, $p=400$. $\Sigma$ has AR(0.5)
structure.
\begin{equation*}
\beta^{\mathrm{Bayes}}=0.556(3, 1.5, 0,0,2,\mathrm{0}_{p-5})^\T.
\end{equation*}

\item[] Model 2: $n=200$, $p=400$. $\Sigma$ has AR(0.5)
structure.
\begin{equation*}
\beta^{\mathrm{Bayes}}=0.582(3, 2.5, -2.8,\mathrm{0}_{p-3})^\T.
\end{equation*}
\item[] Model 3: $n=400$, $p=800$. $\Sigma$ has CS(0.5)
structure.
\begin{equation*}
\beta^{\mathrm{Bayes}}=0.395(3, 1.7, -2.2,
-2.1,2.55,\mathrm{0}_{p-5})^\T.
\end{equation*}

\item[] Model 4: $n=300$, $p=800$. $\Sigma$ is block diagonal with 5
blocks of dimension $160 \times 160$. Each block has CS(0.6)
structure.
\begin{equation*}
\beta^{\mathrm{Bayes}}=0.916(1.2,-1.4,1.15,-1.64,1.5,-1,2,\mathrm{0}_{p-7})^\T.
\end{equation*}


\end{enumerate}

\renewcommand{\baselinestretch}{1}
\begin{table}
\caption{Choices of $g_j$ in Models 1b--4b.} \centering
\fbox{%
\begin{tabular}{c|c|c|c}
$g_j(v)$&Models 1b,2b&Model 3b&Model 4b\\
&$j$&$j$&$j$\\
 \hline
$v^3$&$1, 101,\ldots, 150$&$1,201,\ldots,300$&$3,201,\ldots,300$\\
$\exp{(v)}$&$2,151, \ldots,
200$&$2,301,\ldots,400$&$4,301,\ldots,400$\\
$\mathrm{arctan}{(v)}$&$3,201,\ldots,300$&$3,401,\ldots,500$&$5,401,\ldots,500$\\
$v^3$&$4,\ldots, 50$&$4,6,\ldots,100$&$1,8,\ldots,100$\\
$\Phi(v)$&$51, \ldots,
100$&$5,101,\ldots,200$&$2,101,\ldots,200$\\
$(v+1)^3$&$301,\ldots, 350$&$501,\ldots,600$&$6,501,\ldots,600$\\
$\mathrm{arctan}(2v)$&$351,\ldots,400$&$601,\ldots,800$&$7,601,\ldots,800$\\
\end{tabular}}
\end{table}
\renewcommand{\baselinestretch}{1.5}
We transform $V$ to $X$ by $X=g(V)$ and the final data to be used are $(X,Y)$.
In each type of model, we consider two sets of $g$. We call the
resulting models series a and b. In series a, $X=V$ so that the
SeLDA model becomes the LDA model. In series b, we considered some commonly used transformations
such that that some features become heavily skewed, some heavy-tailed and some
bounded. The choices of $g$ are listed in Table 1. In the simulation
study we also considered the oracle sparse discriminant classifiers
including oracle DSDA and oracle Witten. The idea is to apply the
true transformation to variables and then fit a sparse LDA
classifier using DSDA or Witten and Tibshirani's method.

The simulation results for Models 1a--4a and Models 1b--4b are reported in Table 2 and Table 3, respectively.
Note that in Table 2 DSDA and Witten are the oracle DSDA and the oracle Witten.
We can draw the following conclusions from Tables 2 and 3.

\begin{itemize}
\item  Models 1a--4a are actually LDA models.
SSDA performs very similarly to DSDA. Although SSDA has
slightly higher error rates, this is expected because SSDA does
not use the parametric assumption. On the other hand, in Models
1b--4b, SSDA performs much better than DSDA. These results
jointly show that SSDA is a much more robust sparse discriminant
analysis algorithm than those based on the LDA model.

\item In both tables, SSDA is very close to the oracle DSDA, which empirically shows the high quality of the proposed transformation estimator in Section 3.2.
In all eight cases, SSDA is a good approximation to the Bayes rule, which is consistent with the theoretical results. On the other hand, in Models 1, 2, 4 (a) \& (b) SSDA(pooled) yields slightly lower error rates, which illustrates the advantage of utilizing the information from both classes when estimating the transformation.

\item Se-Witten is a different SSDA classifier in which Witten
and Tibshirani's method is used to fit the SeLDA model after
estimating the transformation functions. Se-Witten performs very
well in Models 1a,2a,1b,2b but it performs very poorly in Models
3a,4a,3b,4b. The same is true for the oracle Witten method. By
comparing SSDA and Se-Witten, we see that DSDA works better
than Witten and Tibshirani's method. In addition to the theory in
Section 4, the simulation also supports the use of DSDA in fitting
the high-dimensional sparse semiparametric LDA model.
\end{itemize}

\renewcommand{\baselinestretch}{1}
\begin{table}
\caption{Simulation results for Models 1a--4a. The reported numbers
are medians  based on 2000 replications. Their standard errors
obtained by bootstrap are in parentheses. TRUE selection and FALSE
selection denote the numbers of selected important variables and
unimportant variables, respectively. } \centering
\fbox{%
{\small
\begin{tabular}{c|ccccccccccc}
&Bayes&Oracle&SSDA&SSDA&DSDA&Oracle&\multicolumn{2}{c}{Se-Witten}&Witten\\
 &&DSDA&(naive)&(pooled)&&Witten&(naive)&(pooled)\\
\hline Model 1 (a)&\\
Error(\%)&10&10.71&11.5&11.11&10.71&11.39&11.56&11.57&11.39\\
&&(0.02)&(0.03)&(0.03)&(0.02)&(0.02)&(0.01)&(0.02)&(0.02)\\
TRUE selection&3&3&3&3&3&3&3&3&3\\
&&(0)&(0)&(0)&(0)&(0)&(0)&(0)&(0)\\
FALSE selection&0&1&2&2&1&26&26&25&26\\
&&(0.14)&(0.38)&(0.1)&(0.14)&(0.42)&(0.09)&(0)&(0.42)\\
\hline Model 2 (a)&\\
Error(\%)&10&11.09&11.66&11.57&11.09&13.36&13.46&13.58&13.36\\
&&(0.02)&(0.03)&(0.03)&(0.02)&(0.03)&(0.04)&(0.02)&(0.03)\\
TRUE selection&3&3&3&3&3&3&3&3&3\\
&&(0)&(0)&(0)&(0)&(0)&(0)&(0)&(0)\\
FALSE selection&0&5&6&6&5&24&24&24&24\\
&&(0.37)&(0.51)&(0.48)&(0.37)&(0)&(0)&(0.5)&(0)\\
 \hline
Model 3 (a)&\\
Error(\%)&20&21.93&22.13&22.3&21.93&33.69&34.18&35.05&33.69\\
&&(0.03)&(0.03)&(0.03)&(0.03)&(0.01)&(0)&(0)&(0.01)\\
TRUE selection&5&5&5&5&5&3&5&5&3\\
&&(0)&(0)&(0)&(0)&(0)&(0)&(0)&(0)\\
FALSE selection&0&14&13&14&14&419.5&795&795&419.5\\
&&(0.59)&(0.57)&(0.58)&(0.59)&(10.19)&(0)&(0)&(10.19)\\
\hline
Model 4 (a)&\\
Error(\%)&10&12.50&13.20&12.78&12.50&23.90&26.14&26&23.90\\
&&(0.02)&(0.05)&(0.03)&(0.02)&(0.01)&(0.01)&(0.01)&(0.01)\\
TRUE selection&7&7&7&7&7&4&5&5&4\\
&&(0)&(0)&(0)&(0)&(0)&(0.02)&(0)&(0)\\
FALSE selection&0&18&17&17&18&35&153&153&35\\
&&(0.70)&(0.54)&(0.45)&(0.70)&(4.43)&(0)&(0)&(4.43)\\
\end{tabular}}}
\end{table}

\begin{table}
\caption{Simulation results for Models 1b--4b. The reported numbers
are medians  based on 2000 replications. Their standard errors
obtained by bootstrap are in parentheses. TRUE selection and FALSE
selection denote the numbers of selected important variables and
unimportant variables, respectively. } \centering
\fbox{%
{\small
\begin{tabular}{c|ccccccccccc}
&Bayes&Oracle&SSDA&SSDA&DSDA&Oracle&\multicolumn{2}{c}{Se-Witten}&Witten\\
 &&DSDA&(naive)&(pooled)&&Witten&(naive)&(pooled)\\
\hline Model 1 (b)&\\
Error(\%)&10&10.71&11.5&11.11&18.24&11.39&11.56&11.57&16.19\\
&&(0.02)&(0.03)&(0.03)&(0.02)&(0.02)&(0.01)&(0.02)&(0.02)\\
TRUE selection&3&3&3&3&3&3&3&3&3\\
&&(0)&(0)&(0)&(0)&(0)&(0)&(0)&(0)\\
FALSE selection&0&1&2&2&1&26&26&25&25\\
&&(0.14)&(0.38)&(0.1)&(0)&(0.42)&(0.09)&(0)&(0.5)\\
\hline Model 2 (b)&\\
Error(\%)&10&11.09&11.66&11.57&19.47&13.36&13.46&13.58&20.16\\
&&(0.02)&(0.03)&(0.03)&(0.09)&(0.03)&(0.04)&(0.02)&(0.04)\\
TRUE selection&3&3&3&3&3&3&3&3&2\\
&&(0)&(0)&(0)&(0)&(0)&(0)&(0)&(0)\\
FALSE selection&0&5&6&6&5&24&24&24&20\\
&&(0.37)&(0.51)&(0.48)&(0.37)&(0)&(0)&(0.5)&(0.17)\\
 \hline
Model 3 (b)&\\
Error(\%)&20&21.93&22.13&22.3&26.76&33.69&34.18&35.05&34.25\\
&&(0.03)&(0.03)&(0.03)&(0.03)&(0.01)&(0)&(0)&(0)\\
TRUE selection&5&5&5&5&5&3&5&5&3\\
&&(0)&(0)&(0)&(0)&(0)&(0)&(0)&(0)\\
FALSE selection&0&14&13&14&15&419.5&795&795&795\\
&&(0.59)&(0.57)&(0.58)&(0.67)&(10.19)&(0)&(0)&(10.19)\\
\hline
Model 4 (b)&\\
Error(\%)&10&12.50&13.20&12.78&19.88&23.90&26.14&26&26.83\\
&&(0.02)&(0.05)&(0.03)&(0.04)&(0.01)&(0.01)&(0.01)&(0.01)\\
TRUE selection&7&7&7&7&6&4&5&5&6\\
&&(0)&(0)&(0)&(0)&(0)&(0.02)&(0)&(0.23)\\
FALSE selection&0&18&17&17&25&35&153&153&153\\
&&(0.70)&(0.54)&(0.45)&(0.83)&(4.43)&(0)&(0)&(0.09)\\
\end{tabular}}}
\end{table}
\renewcommand{\baselinestretch}{1.5}

\subsection{Malaria data}

We further demonstrate SSDA by using the malaria data \citep{blood}. This dataset is available at
\begin{verbatim}
http://www.ncbi.nlm.nih.gov/sites/GDSbrowser?acc=GDS2362.
\end{verbatim}
Out of 71 samples in the dataset, 49 have been infected with
malaria, while 22 are healthy people. The predictors are the
expression levels of 22283 genes. The 71 samples were split with  a roughly 1:1
ratio to form training and testing sets. We report the median of 100
replicates in Table 4. Besides DSDA, the $\ell_1$ logistic
regression \citep{FHT_2008} was also considered because it is an
obvious choice for sparse high-dimensional classification. From
Table 4, it can be seen that both the SSDA methods are significantly more accurate than
DSDA and the $\ell_1$ logistic regression, with SSDA(pooled) yielding the lowest error rate of $1/35$. In addition, the two SSDA methods
select 6 genes, while the other two methods select more than 17 genes.

To gain more insight, we compared the selected genes by SSDA and
those by DSDA or $\ell_1$ logistic regression. In those 100 tries the
2059th gene is most frequently selected by SSDA, but seldom by
DSDA or $\ell_1$ logistic regression. This gene is encoded by $\tt
IRF1$, as it is the first identified interferon regulatory
transcription factor (http://en.wikipedia.org/wiki/IRF1). Discovering the role
of {\tt IRF1} was a major finding in \citet{blood}. Previous studies
show that {\tt IRF1} influences the immune response. Therefore,
healthy and sick people may have different expression levels on this
gene. It is very interesting that we can use a pure statistical
method like SSDA to select {\tt IRF1}. We plot in Figure 1 the
within-group density functions of gene {\tt IRF1} (the 2059th gene).
It can be seen that the raw expression levels of {\tt IRF1} are
skewed, making linear rules unreliable on this gene. After applying
the naive transformation, the distributions of both groups become close to normal, with similar variances. After the pooled transformation, the LDA model becomes even more plausible.

\begin{table}
\caption{Comparison of SSDA(Naive), SSDA(pooled), DSDA and $\ell_1$ logistic
regression on the malaria dataset. The reported numbers are medians
of 100 replicates, with standard errors obtained by bootstrap in
parentheses.}
\begin{center}
\begin{tabular}{|c|cccc|}
\hline
&SSDA&SSDA&DSDA&Logistic\\
&(Naive)&(Pooled)&&\\
\hline
Testing Error&2/35(0.59\%)&1/35(1.35\%)&6/35(0.99\%)&4/35(0.67\%)\\
Fitted Model Size&6(0.4)&6(0.4)&18(1.5)&17(0.6)\\
\hline
\end{tabular}
\end{center}
\end{table}

\begin{figure}
\centering
\makebox{\includegraphics[height=7cm]{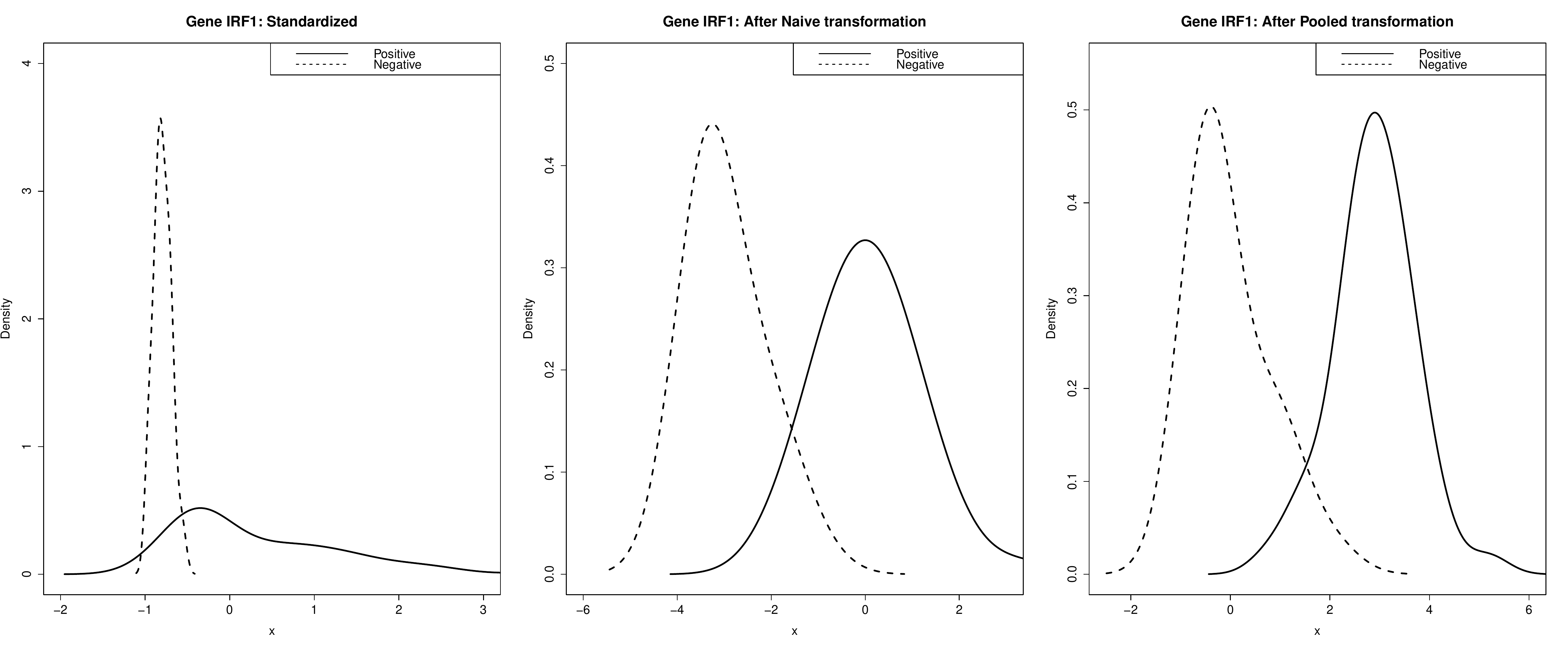}}
\caption{\label{fig01}Density functions of gene {\tt IRF1} (the 2059th gene) in the
malaria data. From left to right the plots display the density functions of
the standardized raw data, naively transformed data as in SSDA(naive) and data after pooled transformation as in SSDA(pooled), respectively.}
\end{figure}

\section{Discussion}
It has been a hot subject of research in recent years to develop
sparse discriminant analysis for high-dimensional classification
and feature selection, rejuvenating the traditional discriminant
analysis. However, sparse discriminant algorithms based on the LDA
model can be very ineffective for non-normal data, as shown in the
simulation study. To overcome the normality limitation, we
consider the semiparametric discriminant analysis model and
propose the SSDA, a high-dimensional semiparametric sparse
discriminant classifier. We have justified SSDA both
theoretically and empirically. For high-dimensional classification
and feature selection, SSDA is more appropriate than the
existing sparse discriminant analysis proposals in the literature.

Although we focus on binary classification throughout the paper, a classifier for multiclass problems is easy to obtain under the semiparametric model. Note that our SSDA method contains two independent steps: transforming the data and fitting a sparse LDA classifier. The first step can be carried out for multiclass problem with proper modification of our pooled estimator, as we will discuss in more detail later, while, in the second step, there already exist multiclass sparse LDA methods, such as sparse optimal scoring \citep{CLT_2008} and $\ell_1$-Fisher's discriminant analysis \citep{WT_2011}. The combination of the transformation and a multiclass sparse LDA method will yield a high-dimensional semiparametric classifier for multiclass problems. Specifically, consider a multiclass model $h(X)\mid Y\sim N(\mu_Y,\Sigma)$ where $Y=1,\ldots, K$ and $\mu_1=0$. Similar to Lemma~\ref{h:twoclass}, we can easily show that $E(\Phi^{-1}\circ F_{kj}(X_j)\mid Y=1)=-\mu_{kj}$ and $E(\Phi^{-1}\circ F_{kj}(X_j)\mid Y=l)=\mu_{lj}-\mu_{kj}$. Define $\hat F_{kj}$ as the empirical CDF of $X_j$ within Class $k$ Winsorized at $(1/n_k^2,1-1/n_k^2)$, where $n_k$ is the sample size within Class $k$. Then we can find
\begin{eqnarray*}
\hat\mu_{kj}^{\mathrm{pool}}&=&\sum_{l=1}^K \hat \pi_{k}\hat \mu_{kj}^{(l)},\,\,\, \hat h_j^{\mathrm{pool}}=\sum_{k=1}^K\hat\pi_k \hat h_{j}^{k} 
\end{eqnarray*}
where $\hat\pi_k=n_k/n$, $\hat \mu_{kj}^{(l)}=\frac{1}{n_k}\sum_{Y^i=k}\Phi^{-1}\circ \hat F_{lj}(X_j^i)-\frac{1}{n_l}\sum_{Y^i=1}\Phi^{-1}\circ \hat F_1(X^i_j)$ and $\hat h_j^{k}=\Phi^{-1}\circ \hat F_{kj}+\hat\mu_{kj}$. With this estimated transformation, one could apply a multiclass sparse LDA method such as the two mentioned above to the pseudo data $(\hat h^{\mathrm{pool}}(X),Y)$.

\section*{Acknowledgement}
Zou’s research is partially supported by NSF grant DMS-08-46068 and the grant N000141110142 from Office of Naval Research. The authors thank the editor, associate editor and two referees for constructive suggestions.

\appendix
\section*{Appendix: proofs}
{
\begin{proof}[Proof of Lemma~\ref{multi-trans}]
First, note that we must have $h_j(X_j)\sim N(0,1)$. Now we show the uniqueness of $h_i$. Suppose $h_i^{(1)}$ and $h_i^{(2)}$ are two strictly increasing transformations such that $h_j^{(1)}(X_j)\sim N(0,1)$,  $h_j^{(2)}(X_j)\sim N(0,1)$. Then for any $t\in \mathbb{R}$, we have
\begin{eqnarray*}
F_j[\{h_j^{(1)}\}^{-1}(t)]&=&\Pr[X_j<\{h_j^{(1)}\}^{-1}(t)]=\Pr\{h_j^{(1)}(X_j)<t\}=\Phi(t)\\
&=&\Pr\{h_j^{(2)}(X_i)<t\}=\Pr[X_j<\{h_j^{(2)}\}^{-1}(t))=F_j[\{h_j^{(2)}\}^{-1}(t)]
\end{eqnarray*}
Because $F_i$ is strictly monotone, we have that $\{h_j^{(1)}\}^{-1}(t)=\{h_j^{(2)}\}^{-1}(t)$ for all $t$, which implies $h_j^{(1)}=h_j^{(2)}$. Now note that $\Phi^{-1}\circ F_j$ is a strictly monotone function that transforms $X_i$ to a standard normal random variable, and the conclusion follows.
\end{proof}
}

\begin{proof}[Proof of Lemma~\ref{h:twoclass}]
By \eqref{SeLDA}, conditional on $Y=-1$, we have $\Phi^{-1}\circ F_{+j}(X_j)\sim N(\mu_{-j},1)$, while, conditional on $Y=+1$, we have $\Phi^{-1}\circ F_{-j}(X_j)+\mu_{-j}\sim N(0,1)$. Hence, the conclusions follow.
\end{proof}

The following properties of the normal distribution are repeatedly used in our proof \citep{SemiCov,ABDJ06}.
\begin{proposition}\label{prop:basic} Let $\phi(t)$ and $\Phi(t)$ be the pdf and CDF of $N(0,1)$, respectively.
\begin{enumerate}
\item For $t \ge 1$,$(2t)^{-1}\phi(t)\le 1-\Phi(t)\le
t^{-1}\phi(t),$
\item For $t
\ge 0.99$, $\Phi^{-1}(t)\le [2\log\{(1-t)^{-1}\}]^{1/2}.$
\end{enumerate}
\end{proposition}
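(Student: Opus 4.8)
The plan is to handle the two inequalities in turn, establishing the Mills-ratio bounds of part 1 first and then deducing the quantile bound of part 2 as a consequence. Throughout I would use the elementary identity $\phi'(s)=-s\phi(s)$.

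For the upper bound $1-\Phi(t)\le t^{-1}\phi(t)$, I would start from $1-\Phi(t)=\int_t^\infty\phi(s)\,ds$ and use $s/t\ge 1$ for $s\ge t>0$ to write $\int_t^\infty\phi(s)\,ds\le t^{-1}\int_t^\infty s\phi(s)\,ds$; the remaining integral equals $\phi(t)$ by the identity above, giving the bound for all $t>0$, in particular for $t\ge 1$. For the lower bound $1-\Phi(t)\ge(2t)^{-1}\phi(t)$, the naive route of iterating integration by parts only yields $1-\Phi(t)\ge(t^{-1}-t^{-3})\phi(t)$, which vanishes at $t=1$ and is therefore too weak at the boundary; this is the main obstacle. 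Instead I would prove the sharper inequality $1-\Phi(t)\ge \frac{t}{1+t^2}\phi(t)$ by a monotonicity argument. Setting $g(t)=(1+t^2)\{1-\Phi(t)\}-t\phi(t)$, one checks that $g(t)\to 0$ as $t\to\infty$ and that $g'(t)=2\{t(1-\Phi(t))-\phi(t)\}$, which is nonpositive precisely because of the already-established upper bound $1-\Phi(t)\le t^{-1}\phi(t)$. Hence $g$ decreases to its limit $0$, so $g\ge 0$ everywhere, giving $1-\Phi(t)\ge\frac{t}{1+t^2}\phi(t)$. Finally, for $t\ge 1$ we have $\frac{t}{1+t^2}\ge\frac{1}{2t}$ (equivalent to $t^2\ge 1$), which delivers the claimed lower bound.

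For part 2, I would set $z=\Phi^{-1}(t)$; since $t\ge 0.99>\Phi(1)$ and $\Phi$ is strictly increasing, it follows that $z>1$, so the bounds of part 1 apply. After taking logarithms, the target $z\le[2\log\{(1-t)^{-1}\}]^{1/2}$ is equivalent to $1-\Phi(z)\le e^{-z^2/2}$. This now follows directly from the part-1 upper bound: $1-\Phi(z)\le z^{-1}\phi(z)=z^{-1}(2\pi)^{-1/2}e^{-z^2/2}\le e^{-z^2/2}$, where the last step uses $z>1>(2\pi)^{-1/2}$.

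I expect the sharp lower bound in part 1 to be the only genuinely delicate point; the upper bound and the reduction in part 2 are routine calculus once the inequality $1-\Phi(z)\le e^{-z^2/2}$ is isolated. I would also note that the threshold $0.99$ in part 2 is more conservative than necessary, since the argument only requires $z\ge 1$, i.e. $t\ge\Phi(1)\approx 0.841$.
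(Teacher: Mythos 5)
Your proof is correct in every detail, but there is nothing in the paper to compare it against line by line: the paper states this proposition as a collection of known Gaussian tail facts and attributes them to the cited references, offering no proof of its own. Your argument is the standard self-contained derivation, and you handle the one genuinely delicate point correctly. The upper bound via $\int_t^\infty\phi(s)\,ds\le t^{-1}\int_t^\infty s\phi(s)\,ds=t^{-1}\phi(t)$ is fine for all $t>0$. For the lower bound, your diagnosis is accurate: two integrations by parts give only $1-\Phi(t)\ge(t^{-1}-t^{-3})\phi(t)$, which degenerates at $t=1$ (and fails on $[1,\sqrt{2})$), so the crude route cannot deliver the constant $1/2$ at the boundary. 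Your monotonicity argument repairs this: with $g(t)=(1+t^2)\{1-\Phi(t)\}-t\phi(t)$ one indeed has $g'(t)=2\{t(1-\Phi(t))-\phi(t)\}\le 0$ by the already-established upper bound, and $g(t)\to 0$ as $t\to\infty$, so $g\ge 0$, giving the sharp bound $1-\Phi(t)\ge t\phi(t)/(1+t^2)$, which dominates $(2t)^{-1}\phi(t)$ precisely when $t^2\ge 1$ (with equality at $t=1$). The reduction of part 2 to $1-\Phi(z)\le e^{-z^2/2}$ with $z=\Phi^{-1}(t)>1$, verified via $z^{-1}(2\pi)^{-1/2}<1$, is also correct, as is your closing remark that the threshold $0.99$ is conservative and any $t\ge\Phi(1)$ suffices. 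One optional simplification: the sharp lower bound also follows in one line from $1-\Phi(t)\ge(1+t^{-2})^{-1}\int_t^\infty(1+s^{-2})\phi(s)\,ds=(1+t^{-2})^{-1}t^{-1}\phi(t)$, since $-s^{-1}\phi(s)$ is an antiderivative of $(1+s^{-2})\phi(s)$; this avoids the limiting argument, though your version has the advantage of reusing the upper bound.
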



Define
\begin{equation}\label{Adef}
A_n=[-(\gamma_1\log{n})^{1/2}, (\gamma_1\log{n})^{1/2}],
\end{equation}
 where
$0<\gamma_1<1$ is a fixed number and $n$ is the sample size. The
following lemma shows that $\hat h_j(x)$ is an accurate estimator of
$h_j(x)$ for $h_j(x)\in A_n$.
\begin{lemma}\label{A}
For sufficiently large $n$ and $0<\gamma_1<1$, we have
\begin{equation*}
\Pr\{\sup_{h_j(x)\in A_n}|\hat h_j(x)-h_j(x)|\ge \epsilon\}\le
2\exp\{-n^{1-\gamma_1}\epsilon^2/(32\pi^2\gamma_1\log{n})\}+2\exp\{-n^{1-\gamma_1}/(16\pi\gamma_1\log{n})\}.
\end{equation*}
\end{lemma}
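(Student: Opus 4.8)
The plan is to push the deviation $\hat h_j(x)-h_j(x)$ back onto the scale of the empirical distribution function, where a Dvoretzky--Kiefer--Wolfowitz (DKW) bound is available, while controlling the explosion of $(\Phi^{-1})'$ near the two endpoints of $A_n$. Write $t^\star=(\gamma_1\log n)^{1/2}$, so that, since $h_j=\Phi^{-1}\circ F_j$ is strictly increasing, the constraint $h_j(x)\in A_n$ is exactly $F_j(x)\in[\Phi(-t^\star),\Phi(t^\star)]$. Let $\tilde F_j$ denote the raw empirical CDF, so that $\hat F_j$ in \eqref{Fhat1} is its Winsorization at $(n^{-2},1-n^{-2})$. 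A first remark is that on $A_n$ the Winsorization is dormant with high probability: by part~1 of Proposition~\ref{prop:basic}, $\Phi(-t^\star)=1-\Phi(t^\star)$ is of order $\phi(t^\star)/t^\star=\Theta\{n^{-\gamma_1/2}(\log n)^{-1/2}\}$, which for $\gamma_1<1$ dwarfs the thresholds $n^{-2}$ and $1-n^{-2}$; hence once $\tilde F_j$ is uniformly close to $F_j$ we have $\hat F_j(x)=\tilde F_j(x)$ throughout the range of interest.

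First I would apply the mean value theorem to $\Phi^{-1}$ to obtain, for each $x$ with $h_j(x)\in A_n$,
\begin{equation*}
|\hat h_j(x)-h_j(x)|=\frac{|\tilde F_j(x)-F_j(x)|}{\phi(\Phi^{-1}(\xi_x))},\qquad \xi_x\ \text{between}\ \tilde F_j(x)\ \text{and}\ F_j(x).
\end{equation*}
The decisive step is to lower-bound the denominator uniformly over this range, and it is where I expect the main difficulty to lie, because $(\Phi^{-1})'=1/\phi(\Phi^{-1}(\cdot))$ is of the large order $n^{\gamma_1/2}$ exactly at the endpoints of $A_n$. I would introduce the good event $G=\{\sup_x|\tilde F_j(x)-F_j(x)|\le\tfrac14\Phi(-t^\star)\}$. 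On $G$ one has $\xi_x\ge\tfrac34\Phi(-t^\star)$ at the lower end (and symmetrically at the upper end), so that $\xi_x$ cannot escape into the uncontrolled region; invoking part~2 of Proposition~\ref{prop:basic} to invert the Gaussian tail then yields $|\Phi^{-1}(\xi_x)|\le t^\star+O(\log t^\star/t^\star)$, whence $\phi(\Phi^{-1}(\xi_x))\ge c\,\phi(t^\star)/t^\star$. The point that makes this work is that the boundary of $A_n$ sits at CDF-distance $\gg n^{-1/2}$ from $0$ and $1$ (because $\gamma_1<1$), which dominates the $n^{-1/2}$-scale fluctuations of $\tilde F_j$ and prevents $\xi_x$ from approaching $0$ or $1$; the extra factor $t^\star$, and hence the eventual $\log n$, enters precisely through the crude tail inversion in part~2 of Proposition~\ref{prop:basic}.

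Consequently, on $G$ the amplification factor is at most of order $t^\star/\phi(t^\star)=\Theta\{(\log n)^{1/2}n^{\gamma_1/2}\}=:A$, so that $|\hat h_j(x)-h_j(x)|\le A\,\sup_x|\tilde F_j(x)-F_j(x)|$ uniformly on $A_n$. I would then decompose
\begin{equation*}
\Pr\Big\{\sup_{h_j(x)\in A_n}|\hat h_j(x)-h_j(x)|\ge\epsilon\Big\}\le\Pr\Big\{\sup_x|\tilde F_j(x)-F_j(x)|\ge\epsilon/A\Big\}+\Pr(G^c),
\end{equation*}
and bound both pieces by DKW, $\Pr\{\sup_x|\tilde F_j(x)-F_j(x)|>t\}\le2\exp(-2nt^2)$. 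The first piece, with $t=\epsilon/A\asymp\epsilon n^{-\gamma_1/2}(\log n)^{-1/2}$, produces a term of the form $2\exp\{-cn^{1-\gamma_1}\epsilon^2/(\gamma_1\log n)\}$, while $G^c$, with $t=\tfrac14\Phi(-t^\star)\asymp n^{-\gamma_1/2}(\log n)^{-1/2}$, produces $2\exp\{-cn^{1-\gamma_1}/(\gamma_1\log n)\}$. Tracking the explicit constants furnished by Proposition~\ref{prop:basic} (namely $\Phi(-t^\star)\sim\phi(t^\star)/t^\star$ and $\phi(t^\star)=(2\pi)^{-1/2}n^{-\gamma_1/2}$) recovers the claimed denominators $32\pi^2\gamma_1\log n$ and $16\pi\gamma_1\log n$, completing the proof.
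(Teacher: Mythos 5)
Your proposal is correct and follows essentially the same route as the paper's own proof: the mean value theorem applied to $\Phi^{-1}$, a DKW-based good event that keeps $\xi_x$ away from $0$ and $1$ so that $(\Phi^{-1})'(\xi_x)\le c\, n^{\gamma_1/2}(\gamma_1\log n)^{1/2}$ via the Gaussian tail bounds of Proposition~\ref{prop:basic}, the observation that the Winsorization is inactive on $A_n$ for large $n$, and a second application of DKW at scale $\epsilon$ divided by the amplification factor. The only caveat is constant tracking: with your good-event threshold $\frac{1}{4}\Phi\{-(\gamma_1\log n)^{1/2}\}$ the second term comes out as $2\exp\{-n^{1-\gamma_1}/(64\pi\gamma_1\log n)\}$ rather than the stated $16\pi$; choosing the threshold as half of the lower bound $n^{-\gamma_1/2}/\{2(2\pi\gamma_1\log n)^{1/2}\}$ for $\Phi\{-(\gamma_1\log n)^{1/2}\}$, as the paper effectively does, recovers both stated denominators exactly.
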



\begin{proof}[Proof of Lemma~\ref{A}]
By mean value theorem,
\begin{equation*}
\hat h_j(x)-h_j(x) = (\Phi^{-1})'(\xi)\{\hat F_j(x)-F_j(x)\},
\end{equation*}
for some $\xi \in [\min\{\hat F_j(x),F_j(x)\},\max\{\hat F_j(x),F_j(x)\}]$.

First, we bound $|(\Phi^{-1})'(\xi)|$. This is achieved by bounding
$F_j(x)$ and $\hat F_j(x)$. By definition, for any $h_j(x)\in A_n$,
\begin{eqnarray*}
n ^{-\gamma_1/2}/\{2(2\pi\gamma_1\log{n })^{1/2}\}&\le&
\Phi\{-(\gamma_1\log{n })^{1/2}\}\le F_j(x) \\
&\le&
\Phi\{(\gamma_1\log{n })^{1/2}\}\le 1-n
^{\gamma_1/2}/\{2(2\pi\gamma_1\log{n })^{1/2}\}.
\end{eqnarray*}
On the other hand, for $x$ such that $h_j(x)\in A_n$
\begin{eqnarray*}
&&\Pr[n ^{-\gamma_1/2}/\{4(2\pi\gamma_1\log{n })^{1/2}\}\le \hat F_j(x) \le 1-n ^{-\gamma_1/2}/\{4(2\pi\gamma_1\log{n })^{1/2}\}]\\
&\ge& \Pr[\sup_{h_j(x)\in A_n}|\tilde F_j(x)-F_j(x)|\le n ^{-\gamma_1/2}/\{4(2\pi\gamma_1\log{n })^{1/2}\}]\\
 &\ge&
1-2\exp\{-n^{1-\gamma_1}/(16\pi\gamma_1\log{n})\},
\end{eqnarray*}
where the last inequality follows from Dvoretzky-Kiefer-Wolfowitz
(DKW) inequality.

Consequently, with a probability no less than
$1-2\exp\{-n^{1-\gamma_1}/(16\pi\gamma_1\log{n})\}$,
\begin{equation*}
n ^{-\gamma_1/2}/\{4(2\pi\gamma_1\log{n })^{1/2}\}\le
\xi\le 1-n^{-\gamma_1/2}/\{4(2\pi\gamma_1\log{n })^{1/2}\},
\end{equation*}
 and, combining this fact with Proposition \ref{prop:basic}, we
have
\begin{eqnarray*}
|(\Phi^{-1})'(\xi)|&=&[\phi\{\Phi^{-1}(\xi)\}]^{-1}=(2\pi)^{1/2}\exp\{\Phi^{-1}(\xi)^2/2\}\\
&\le&(2\pi)^{1/2}\exp[\log\{4n^{\gamma_1/2}(2\pi\gamma_1\log{n})^{1/2}\}]\\
&=&8\pi n^{\gamma_1/2}(\gamma_1\log{n})^{1/2}\equiv M_n.
\end{eqnarray*}
 Then
\begin{eqnarray*}
&&\Pr\{\sup_{h_j(x)\in A_n}|\hat h_j(x)-h_j(x)|>\epsilon\}\\
&\le& \Pr\{M_n\sup_{h_j(x) \in A_n}|\hat
F_j(x)-F_j(x)|>\epsilon\}+2\exp\{-n^{1-\gamma_1}/(16\pi\gamma_1\log{n})\}.
\end{eqnarray*}
For the first term on the right hand side,
\begin{eqnarray*}
&&\Pr\{M_n\sup_{h_j(x) \in A_n}|\hat F_j(x)-F_j(x)|>\epsilon\}\\
&\le&\Pr\{M_n\sup_{h_j(x)\in A_n}|\hat F_j(x)-\tilde
F_j(x)|>\epsilon/2\}+\Pr\{M_n\sup_{h_j(x)\in A_n}|
F_j(x)-\tilde F_j(x)|>\epsilon/2\}.\\
\end{eqnarray*}
Because $\sup_{h_j(x)\in A_n}|\hat F_j(x)-\tilde F_j(x)|\le
\delta_n=1/n^2$, $\delta_nM_n\rightarrow 0$ and so the
first term is 0 for sufficiently large $n$. Apply the DKW inequality
to the second term and the conclusion follows.
\end{proof}
The above lemma guarantees that $\hat h_j(X_j)$ is very close to
$h_j(X_j)$ on $A_n$. Now we consider observations in $A_n^c$.
Partition
$A_n^c$ to three regions:
\begin{eqnarray*}
B_n&=&[-\gamma_2\log{n},-(\gamma_1\log{n})^{1/2})\cup
((\gamma_1\log{n})^{1/2},\gamma_2\log{n}];\\
C_n&=&[-n^{\gamma_3},-\gamma_2\log{n})\cup
(\gamma_2\log{n},n^{\gamma_3}];\\
D_n&=&(-\infty,-n^{\gamma_3})\cup (n^{\gamma_3},\infty).
\end{eqnarray*}
Define $\#B_n=\#\{i:h_j(X_j^i)\in B_n\}$ and $\#C_n$, $\#D_n$
analogously.
\begin{lemma}\label{AC}
For sufficiently large $n$ and positive constants
$\alpha_1,\alpha_2$ such that $\alpha_1>1-\gamma_1/2$, we
have
\begin{eqnarray}
\sup_{h_j(x)\in B_n}|\hat h_j(x)-h_j(x)|&\le& 2(\log{n})^{1/2}+\gamma_2\log{n};\label{bound:B}\\
\sup_{h_j(x)\in C_n}|\hat h_j(x)-h_j(x)|&\le&
2(\log{n})^{1/2}+n^{\gamma_3};\label{bound:C}\\
\Pr(\# B_n>n^{\alpha_1})&\le& \exp(-n^{2\alpha_1-1}/4);\label{no:B}\\
\Pr(\# C_n>n^{\alpha_2})&\le& \exp(-n^{2\alpha_2-1}/4);\label{no:C}\\
\Pr(\# D_n >1)&\le&
(2\pi)^{-1/2}2n^{1-\gamma_3}\exp(-n^{2\gamma_3}/2).\label{no:D}
\end{eqnarray}
\end{lemma}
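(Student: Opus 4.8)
The plan is to split the five claims into two deterministic sup-norm bounds, \eqref{bound:B}--\eqref{bound:C}, and three probabilistic counting bounds, \eqref{no:B}--\eqref{no:D}. The former are crude worst-case estimates that exploit only the Winsorization in \eqref{Fhat1}, whereas the latter rest on the fact that, under the model \eqref{SeLDA}, each marginal $h_j(X_j)=\Phi^{-1}\circ F_j(X_j)\sim N(0,1)$, so the three counts are binomial.

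For the deterministic bounds I would first record that the Winsorization confines $\hat F_j$ to $[n^{-2},1-n^{-2}]$, hence $\hat h_j(x)=\Phi^{-1}(\hat F_j(x))$ is \emph{uniformly} bounded. Applying Proposition~\ref{prop:basic}(2) at $t=1-n^{-2}$ gives $\Phi^{-1}(1-n^{-2})\le\{2\log(n^2)\}^{1/2}=2(\log n)^{1/2}$, and by symmetry $\Phi^{-1}(n^{-2})\ge-2(\log n)^{1/2}$, so $|\hat h_j(x)|\le 2(\log n)^{1/2}$ for every $x$ once $n$ is large. A triangle inequality $|\hat h_j(x)-h_j(x)|\le|\hat h_j(x)|+|h_j(x)|$, combined with the defining range of each region ($|h_j(x)|\le\gamma_2\log n$ on $B_n$ and $|h_j(x)|\le n^{\gamma_3}$ on $C_n$), then produces \eqref{bound:B} and \eqref{bound:C} immediately.

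For the counting bounds I would write $\#B_n$, $\#C_n$, $\#D_n$ as sums of $n$ i.i.d.\ Bernoulli indicators with success probabilities $p_B=\Pr(Z\in B_n)$, $p_C=\Pr(Z\in C_n)$, $p_D=\Pr(Z\in D_n)$ for $Z\sim N(0,1)$, and bound each using the Gaussian tail estimate in Proposition~\ref{prop:basic}(1) at the inner endpoint of the region: $p_B\le 2(\gamma_1\log n)^{-1/2}(2\pi)^{-1/2}n^{-\gamma_1/2}$, $p_C\le 2(\gamma_2\log n)^{-1}(2\pi)^{-1/2}\exp\{-(\gamma_2\log n)^2/2\}$, and $p_D\le 2n^{-\gamma_3}(2\pi)^{-1/2}\exp(-n^{2\gamma_3}/2)$. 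For \eqref{no:B} and \eqref{no:C} I would invoke Hoeffding's inequality for bounded i.i.d.\ summands: the hypothesis $\alpha_1>1-\gamma_1/2$ forces $E(\#B_n)=np_B=O(n^{1-\gamma_1/2}/\sqrt{\log n})\ll n^{\alpha_1}$ (and $E(\#C_n)=np_C\to 0$, below $n^{\alpha_2}/2$ for any fixed $\alpha_2>0$), so the threshold exceeds twice the mean for large $n$, giving $\Pr(\#B_n>n^{\alpha_1})\le\exp\{-2(n^{\alpha_1}/2)^2/n\}=\exp(-n^{2\alpha_1-1}/2)\le\exp(-n^{2\alpha_1-1}/4)$, and likewise for $\#C_n$. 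Finally, \eqref{no:D} follows from a union bound (equivalently Markov's inequality on the count): $\Pr(\#D_n>1)\le\Pr(\#D_n\ge 1)\le E(\#D_n)=np_D\le(2\pi)^{-1/2}2n^{1-\gamma_3}\exp(-n^{2\gamma_3}/2)$, which is exactly the claimed right-hand side.

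I do not expect a genuine obstacle here; the difficulty is bookkeeping rather than any conceptual hurdle. The two points that require care are verifying that the expected counts are negligible relative to their thresholds—this is precisely where $\alpha_1>1-\gamma_1/2$ (and, implicitly, $\gamma_1<1$, which forces $\alpha_1>1/2$ so that the exponent $2\alpha_1-1$ is positive) is used—and keeping the constants in Proposition~\ref{prop:basic}(1) exact so that the $(2\pi)^{-1/2}$ factor in \eqref{no:D} comes out precisely. A further minor subtlety is that $B_n$ and $C_n$ are two-sided, so the tail probabilities carry a harmless factor of two that should simply be tracked.
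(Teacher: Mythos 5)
Your proposal is correct and follows essentially the same route as the paper's proof: the deterministic bounds \eqref{bound:B}--\eqref{bound:C} come from the Winsorization forcing $|\hat h_j(x)|\le \Phi^{-1}(1-n^{-2})\le 2(\log n)^{1/2}$ plus the triangle inequality against the defining range of each region, and the counting bounds come from Gaussian tail estimates on the cell probabilities followed by Hoeffding's inequality for \eqref{no:B}--\eqref{no:C} (using $\alpha_1>1-\gamma_1/2$ so the threshold dominates the mean) and a union bound for \eqref{no:D}. The only cosmetic difference is that the paper centers at the exact mean bound $n^{1-\gamma_1/2}$ and absorbs the factor $(1-n^{1-\gamma_1/2-\alpha_1})^2$ into the constant $1/4$, whereas you center at $n^{\alpha_1}/2$; both yield the stated bounds.
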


\begin{proof}[Proof of Lemma \ref{AC}]
Equations (\ref{bound:B})--(\ref{bound:C}) are direct consequences
of the definitions of $\hat h$ and $B_n$, $C_n$. Indeed, because
$\hat F<1-\delta_n$, by Proposition \ref{prop:basic}, for $x\in
B_n\cup C_n$
$$
|\hat h_j(x)|\le
\Phi^{-1}(1-\delta_n)\le\{2\log{(\delta_n^{-1})}\}^{1/2}=2(\log{n})^{1/2}.
$$
Combining this bound with the definitions of $B_n$, $C_n$, we have
the desired conclusions.

For (\ref{no:B}), note that, for sufficiently large $n$,
\begin{equation*}
\Pr\{h_j(X_j)\in B_n\}\le 2\Pr\{h_j(X_j)>(\gamma_1\log{n})^{1/2}\}\le
2^{1/2}n^{-\gamma_1/2}/(\pi\gamma_1\log{n})^{1/2}\le n^{-\gamma_1/2}.
\end{equation*}
Therefore, by Hoeffding's inequality
\begin{eqnarray*}
&&\Pr(\# B_n>n^{\alpha_1})\\
&\le& \Pr(\sum_{i=1}^n[I\{h_j(X_j^i)\in B_n\}-\Pr\{h_j(X_j^i)\in
B_n\}]>n^{\alpha_1}-n^{1-\gamma_1/2})\\
&\le&
\exp\{-n^{2\alpha_1-1}(1-n^{1-\gamma_1/2-\alpha_1})^2/2\}\le \exp(-n^{2\alpha_1-1}/4),
\end{eqnarray*}
for sufficiently large $n$.

For (\ref{no:C}), note that \begin{equation*} \Pr\{h_j(X_j^i)\in
C_n\}\le 2n^{-\gamma_2^2\log{n}/2}/\gamma_2\log{n}.
\end{equation*}
 So
(\ref{no:C}) can be proven similarly.

For (\ref{no:D}),
\begin{eqnarray*}
\Pr(\# D_n>1)\le 2n\Pr\{h_j(X^{i}_j)>n^{\gamma_3}\}\le
2n^{1-\gamma_3}(2\pi)^{-1/2}\exp(-n^{2\gamma_3}/2).
\end{eqnarray*}
\end{proof}

\begin{proof}[Proof of Theorem~\ref{accuracy}]
We first prove (\ref{eq:muhat}).

\begin{eqnarray*}
\Pr(|\hat \mu_j-\mu_j|>\epsilon)&\le& \Pr\{n^{-1}\sum_{i=1}^n
|\hat
h_j(X_j^i)-h_j(X_j^i)|>\epsilon/2\}+\Pr\{|n^{-1}\sum_{i=1}^n
h_j(X_j^i)-\mu_j|>\epsilon/2\}\\
&\equiv& L_1+L_2.
\end{eqnarray*}
By the Chernoff bound, $L_2\le 2\exp(-cn\epsilon^2)$.

\begin{eqnarray*}
L_1&\le& \Pr\{\sup_{h_j(x)\in A_n}|\hat
h_j(x)-h_j(x)|>\epsilon/8\}+\Pr\{n^{-1}(\#
B_n)\sup_{h_j(x)\in B_n}|\hat
h_j(x)-h_j(x)|>\epsilon/8\}\\
&+&\Pr\{n^{-1}(\# C_n)\sup_{h_j(x)\in C_n}|\hat
h_j(x)-h_j(x)|>\epsilon/8\}\\
&+&\Pr\{n^{-1}(\#
D_n)\sup_{h_j(x)\in D_n}|\hat
h_j(x)-h_j(x)|>\epsilon/8\}.
\end{eqnarray*}

By Lemma \ref{AC}, it can be checked that, under Condition (C1), if
$\# B_n\le n^{\alpha_1}$ and $\# D_n=0$ then
\begin{eqnarray*}
&&\Pr\{n^{-1}(\#
B_n)\sup_{h_j(x)\in B_n}|\hat
h_j(x)-h_j(x)|>\epsilon/8\}=0,\\
&&\Pr\{n^{-1}(\#
D_n)\sup_{h_j(x)\in D_n}|\hat
h_j(x)-h_j(x)|>\epsilon/8\}=0,
\end{eqnarray*}
 for sufficiently large $n$. If $\gamma_3+\alpha_2<1$,
similarly we have
\begin{equation*}
\Pr\{n^{-1}(\# C_n)\sup_{h_j(x)\in C_n}|\hat
h_j(x)-h_j(x)|>\dfrac{\epsilon}{8}\}=0.
\end{equation*}
 It follows that, if $\alpha_1<1$ and $\gamma_3+\alpha_2<1$, then we
 have
\begin{eqnarray*}
L_1\le
4\exp(-cn^{1-\gamma_1}\epsilon^2/\gamma_1)+\exp(-cn^{2\alpha_1-1})+\exp(-cn^{2\alpha_2-1})+(2\pi)^{-1/2}2n^{1-\gamma_3}\exp(-n^{2\gamma_3}/2),
\end{eqnarray*}
Take $\gamma_1=2\rho, \alpha_1=1-\rho/2,
\alpha_2=3/4-\rho/2,
\gamma_3=1/4-\rho/2$ and the conclusion follows.

Now we prove (\ref{eq:sigmahat}). By the proof in \cite{SemiCov}, it
suffices to bound
\begin{equation*}
\Pr[|n^{-1}\sum_{i=1}^n h_j(X_j^i)\{\hat
h_k(X_k^i)-h_k(X_k^i)\}|>\epsilon].
\end{equation*}
We can decompose the summation into four terms.
\begin{eqnarray*}
&&n^{-1}\sum_{i=1}^n h_j(X_j^i)\{\hat
h_k(X_k^i)-h_k(X_k^i)\}\\
&=&n^{-1}(\sum_{h_j(X_j^i)\in D_n \mbox{ or } h_k(X_k^i)\in
D_n}+\sum_{h_j(X_j^i)\notin D_n, h_k(X_k^i)\in
C_n}\\
&&+\sum_{h_j(X_j^i)\in A_n\cup B_n, h_k(X_k^i)\in
B_n}+\sum_{h_j(X_j^i)\in A_n, h_k(X_k^i)\in A_n})[h_j(X_j^i)\{\hat
h_k(X_k^i)-h_k(X_k^i)\}]\\
&\equiv& S_1+S_2+S_3+S_4.
\end{eqnarray*}
Write $\# D_{nj}=\# \{i: h_j(X_j^i)\in D_n\}$. Then
\begin{eqnarray*}
\Pr(|S_1|>\epsilon)&\le& \Pr(\#D_{nj}>1)+\Pr(\# D_{nk}>1)\\
&\le&
4n^{1-\gamma_3}(2\pi)^{-1/2}\exp(n^{2\gamma_3}/2).\end{eqnarray*}
Note that, for a pair of $\alpha_2, \gamma_3$, such that
$\alpha_2+2\gamma_3-1<0$, we have
$n^{\alpha_2+2\gamma_3-1}\rightarrow 0$. Therefore, for sufficiently
large $n$,
\begin{eqnarray*}
\Pr(|S_2|>\epsilon)&\le& \Pr(n^{-1}\sum_{h_k(X_k^i)\in
C_n}|\hat
h_k(X_k^i)-h_k(X_k^i)|>\epsilon/n^{\gamma_3})\\
&\le& \Pr(\# C_n>n^{\alpha_2})+\Pr[n^{\alpha_2-1}\{2(\log{n})^{1/2}+n^{\gamma_3}\}>\epsilon/n^{\gamma_3}]\\
&\le& \exp(-n^{2\alpha_2-1}/4)+0,
\end{eqnarray*}
Similarly, for $0<\alpha_1<1$,
\begin{eqnarray*}
\Pr(|S_3|>\epsilon)&\le& \Pr(\#
B_n>n^{\alpha_1})+\Pr[n^{\alpha_1-1}(\gamma_2\log{n})\{2(\log{n})^{1/2}+\gamma_2\log{n}\}>\epsilon]\\
&\le&
\exp(-n^{2\alpha_1-1}/4)+0,
\end{eqnarray*}
where $0<\alpha_1<1$. Finally,
\begin{eqnarray*}
\Pr(|S_4|>\epsilon)&\le& \Pr\{\sup_{h_k(X_k^i)\in
A_n}|\hat h_k(X_k^i)-h_k(X_k^i)|>\epsilon(\gamma_1\log{n})^{-1/2}\}\\
&\le& 4\exp\{-cn^{1-\gamma_1}\epsilon^2/(\gamma_1^2\log^2n)\}.
\end{eqnarray*}
Pick $\gamma_1=2\rho,
\gamma_3=1/6-\rho,\alpha_2=2/3-\rho/2,
\alpha_1=1-\rho/2$ and the conclusion follows.
\end{proof}

\begin{proof}[Proof of Corollary~\ref{twoclass}]
Note that $n_+$ is a summation of $n$ i.i.d random variables with
distribution $\mathrm{Bernoulli}(1,\pi_+)$. Therefore, by Chernoff
bound, there exists $c>0$ such that
$\Pr(n_+>\pi_+n/2)>1-2\exp(-cn)$. Hence, by
Theorem~\ref{accuracy},
\begin{equation*}
\Pr(|\hat \mu_{+j}-\mu_{+j}|\ge
\epsilon/2)<\zeta_1^*(\pi_+^{1/2}\epsilon/2)+2\exp(-cn).
\end{equation*}
Similarly,
\begin{equation*}
\Pr(|\hat \mu_{-j}-\mu_{-j}|\ge
\epsilon/2)<\zeta_1^*(\pi_-^{1/2}\epsilon/2)+2\exp(-cn).
\end{equation*}
Hence, we have (\ref{SSeLDA:mu}). Equation (\ref{SeLDA:sigma}) can
be proven similarly.
\end{proof}

\begin{proof}[Proof of Theorem~\ref{thm1} and Theorem~\ref{asy} ]
By \cite{SLDA}, the consistency is implied by accurate estimators of
$\hat \mu_y$, $\hat \sigma_{ij}$. Therefore, Theorem~\ref{thm1} can be
proven by following the proof in their paper and applying
Corollary~\ref{twoclass}.
\end{proof}

Theorem~\ref{asy} is direct consequence of Theorem~\ref{thm1}.
Hence, the proof is omitted here for the sake of space.

\begin{lemma}
For any $\epsilon<\min\{\epsilon_0,\lambda/(2\phi\Delta_1),\lambda\}$ and large enough $n$ such that $\epsilon> sn^{-1/4}$, we have
\begin{enumerate}
\item
\begin{equation}\label{l1beta}
\Pr(\Vert\hat\beta_A-\beta_A\Vert_1\ge \epsilon)\le 2s^2\zeta_2(\epsilon/s)+2s\zeta_1(\epsilon/s).
\end{equation}
\item If we further assume that $\pi_+,\pi_->c>0$, then
\begin{eqnarray}\label{beta0}
\Pr(|\hat\beta_0-\beta_0|\ge c\epsilon)\le 2\exp(-cn)+cs\zeta_1[\epsilon/\{s(\phi\Delta_1+\Delta_2)\}]\\\nonumber
+2p\zeta_1\{\lambda(1-\kappa+2\epsilon\phi)/4(1+\kappa)\}
+2s^2\zeta_2\{\epsilon/(s\Delta_2)\}+2ps\zeta_2(\epsilon/s)
\end{eqnarray}
\end{enumerate}
\end{lemma}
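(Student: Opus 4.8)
The plan is to handle the two parts separately, reducing each to the entrywise concentration bounds of Corollary~\ref{twoclass} and, for the intercept, to the support-recovery guarantee of Theorem~\ref{thm1}. Here $\hat\beta_A$ is the active-set estimator exactly as in Theorem~\ref{thm1}(3), so \eqref{l1beta} says nothing about $A^c$ and carries no factor $p$. For part 1 I would begin from the stationarity conditions of $\hat\beta_A$ on $A$ together with the population identity $\Omega_{AA}\beta_A=\dma$, which follows from $\beta^\star=\Omega^{-1}(\mu_+-\mu_-)$ and $\beta^\star_{A^c}=0$. Subtracting and inverting $\hat\Omega_{AA}$ (after absorbing the fixed positive scaling that relates the least-squares fit to $\beta^\star$ into the penalty level and into $c$) gives
\[
\hat\beta_A-\beta_A=(\hat\Omega_{AA})^{-1}\big[(\dhma)-(\dma)-(\hat\Omega_{AA}-\Omega_{AA})\beta_A-\tfrac{\lambda}{2}\hat s_A\big],
\]
where $\hat s_A$ is the sign subgradient on $A$. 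On the event $\Vert(\hat\Omega_{AA})^{-1}\Vert_\infty\le 2\varphi$ I would bound the three pieces in $\ell_1$ by $s$ times the $\ell_\infty$ mean-difference error, the entrywise error of $\hat\Omega_{AA}$ times $\Vert\beta_A\Vert_1\le\varphi\Delta_1$, and $\tfrac{\lambda}{2}s$; since $\Omega=\Sigma+\pi_+\pi_-(\mu_+-\mu_-)(\mu_+-\mu_-)^\T$, the error of $\hat\Omega_{AA}$ is itself controlled through both the covariance ($\zeta_2$) and the mean-difference ($\zeta_1$) bounds. Taking per-coordinate threshold $\epsilon/s$ and a union bound over the at most $s$ mean coordinates and $s^2$ covariance entries of $A$ then yields $2s^2\zeta_2(\epsilon/s)+2s\zeta_1(\epsilon/s)$; the hypotheses $\epsilon<\min\{\epsilon_0,\lambda/(2\varphi\Delta_1),\lambda\}$ and $\epsilon>sn^{-1/4}$ are precisely what make the invertibility of $\hat\Omega_{AA}$ and the smallness of the penalty term go through. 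This is the $\ell_1$ analogue of Theorem~\ref{thm1}(3), aggregated in $\ell_1$ rather than $\ell_\infty$.

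For part 2 I would first condition on the support-recovery event of Theorem~\ref{thm1}(1), so that $\hat\beta=(\hat\beta_A,\mathrm{0}_{A^c})$ and every form in $\hat\beta$ collapses to its $A$-block; this event fails with probability at most $2p\zeta_1\{\lambda(1-\kappa+2\epsilon\varphi)/4(1+\kappa)\}+2ps\zeta_2(\epsilon/s)$, which supplies the two terms carrying the factor $p$ in \eqref{beta0}. Using the intercept formula in \eqref{SSeLDA} I would write $\hat\beta_0-\beta_0=\mathrm{I}+\mathrm{II}$, where $\mathrm{I}=-\tfrac12(\hat\mu_++\hat\mu_-)^\T\hat\beta+\tfrac12(\mu_++\mu_-)^\T\beta$ is the location part and $\mathrm{II}$ is the difference between the scaling term $\log(\hat\pi_+/\hat\pi_-)\,\hat\beta^\T\hat\Sigma\hat\beta/\{(\hat\mu_+-\hat\mu_-)^\T\hat\beta\}$ and its population analogue. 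Term $\mathrm{I}$ is routine: writing it as $-\tfrac12\{(\hat\mu_++\hat\mu_-)-(\mu_++\mu_-)\}^\T\hat\beta_A-\tfrac12(\mu_{+A}+\mu_{-A})^\T(\hat\beta_A-\beta_A)$ and using $\Vert\hat\beta_A\Vert_1\le\varphi\Delta_1+\epsilon$, $\Vert\mu_{+A}+\mu_{-A}\Vert_\infty=\Delta_2$, and the part-1 bound, a union bound over $s$ mean coordinates at threshold $\epsilon/\{s(\varphi\Delta_1+\Delta_2)\}$ produces the $cs\zeta_1[\epsilon/\{s(\varphi\Delta_1+\Delta_2)\}]$ contribution.

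The main obstacle is Term $\mathrm{II}$, which is a ratio, so before propagating errors I must keep its denominator away from zero. I would use that the population denominator equals $(\dma)^\T(\Omega_{AA})^{-1}(\dma)>0$ and is bounded below, while its perturbation $(\hat\mu_+-\hat\mu_-)^\T\hat\beta-(\dma)^\T\beta_A$ is controlled by the part-1 $\ell_1$ error and the $\ell_\infty$ mean-difference error. With the denominator pinned down I would linearize the ratio: the numerator deviation $\hat\beta^\T\hat\Sigma\hat\beta-\beta^\T\Sigma\beta$ splits into a covariance part, handled by union-bounding the $s^2$ entries of $\hat\Sigma_{AA}$ at threshold $\epsilon/(s\Delta_2)$ to give the $2s^2\zeta_2\{\epsilon/(s\Delta_2)\}$ term, and a direction part absorbed by the part-1 bound; the factor $\log(\hat\pi_+/\hat\pi_-)$ concentrates around $\log(\pi_+/\pi_-)$ at rate $2\exp(-cn)$ by a Chernoff bound on $\hat\pi_\pm$, which is where the assumption $\pi_+,\pi_->c>0$ enters. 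Summing the location, scaling, class-proportion and support-recovery contributions, and absorbing numerical constants into $c$, reproduces the right-hand side of \eqref{beta0} with tolerance $c\epsilon$.
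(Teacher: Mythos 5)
Your proposal is correct and follows essentially the same route as the paper: part 1 rests on the KKT/stationarity decomposition restricted to $A$ (the paper imports the deterministic bound $\Vert\hat\beta_A-\beta_A\Vert_1\le(1-\eta_1\varphi)^{-1}\{\lambda/2+\varphi\Vert(\dhma)-(\dma)\Vert_1+\varphi^2\eta_1\Delta_1\}$ from the DSDA paper and then union-bounds the $s$ mean coordinates and $s^2$ covariance entries via Corollary~\ref{twoclass}), and part 2 proceeds exactly as you describe by conditioning on $\hat\beta_{A^c}=0$, splitting the intercept into class-proportion, mean and slope errors, and using a Chernoff bound on $\hat\pi_\pm$. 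If anything, your explicit handling of the ratio $\hat\beta^\T\hat\Sigma\hat\beta/\{(\hat\mu_+-\hat\mu_-)^\T\hat\beta\}$ --- pinning the denominator away from zero before linearizing --- is more careful than the paper's displayed computation, which suppresses that factor.
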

\begin{proof}
We first prove \eqref{l1beta}.
Similar to the proof of Conclusion 3, Theorem 1 in \cite{SLDA}, we have
\begin{equation}
\Vert\hat\beta_A-\beta_A\Vert_1\le (1-\eta_1\phi)^{-1}\{\lambda/2+\phi\Vert(\dhma)-(\dma)\Vert_1+\phi^2\eta_1\Delta_1\}
\end{equation}
where $\eta_1=\Vert \Omega_{AA}-\Omega_{AA}^{(n)}\Vert_{\infty}$. Under the events $\eta_1<\epsilon$ and $\Vert(\dhma)-(\dma)\Vert_1<\epsilon$ we have $\Vert\hat\beta_A-\beta_A\Vert_1\le \epsilon$. Hence, \eqref{l1beta} follows.

For \eqref{beta0}, assume that $\hat\beta_{A^C}=0$. Then we have
\begin{eqnarray*}
|\hat\beta_0-\beta_0|&=&|\{\log{(n_+/n_-)}-\log{(\pi_+/\pi_-)}\}-(\hat\mu_{+A}+\hat\mu_{-A})^\T\hat\beta_A/2+(\mu_{+A}+\mu_{-A})^\T\beta_A/2|\\
&\le&|\log{\hat\pi_+}-\log{\pi_+}|+|\log{\hat\pi_-}-\log{\pi_-}|\\
&&+|\{(\hat\mu_{+A}+\hat\mu_{-A})-(\mu_{+A}+\mu_{-A})\}^\T(\hat\beta_A-\beta_A)|/2\\
&&+|(\mu_{+A}+\mu_{-A})^\T(\hat\beta_A-\beta_A)|+|(\mu_{+A}+\mu_{-A})^\T(\hat\beta_A-\beta_A)|/2\\
\end{eqnarray*}
Under the events $|\hat\pi_j-\pi_j|\le \min\{c/2,2\epsilon/c\}$, $\Vert\hat\mu_{jA}-\mu_{jA}\Vert_1\le\epsilon/\phi\Delta_1$ and $\Vert\hat\beta_A-\beta_A\Vert_1\le\epsilon/\Delta_2$, we have $|\hat\beta_0-\beta_0|\le c\epsilon$.
\end{proof}

\begin{proof}[Proof of Theorem~\ref{error}]
Note that
\begin{eqnarray*}
R_n&\le&1-\Pr(Y=\sign(h(X)^\T\beta+\beta_0), \sign(\hat h(X)^\T\hat\beta+\hat\beta_0)=\sign(h(X)^\T\beta+\beta_0))\\
&\le& R+\Pr(\sign(\hat h(X)^\T\hat\beta+\hat\beta_0)\ne\sign(h(X)^\T\beta+\beta_0))
\end{eqnarray*}
Therefore,
\begin{eqnarray}
R_n-R&\le& \Pr(\sign(\hat h(X)^\T\hat\beta+\hat\beta_0)\ne\sign(h(X)^\T\beta+\beta_0))\\
&\le& \Pr(|h(X)^\T\beta+\beta_0|\le\epsilon)\\\nonumber
&&+\Pr(|(\hat h(X)^\T\hat\beta+\hat\beta_0)-(h(X)^\T\beta+\beta_0)|
\ge\dfrac{\epsilon}{2})
\end{eqnarray}

Now
\begin{equation}
\Pr(|h(X)^\T\beta+\beta_0|\le\epsilon)\le\dfrac{c\epsilon}{\sqrt{2\pi}}
\end{equation}
For the second term, assume that $\hat\beta_{A^C}=0$, $|\hat\beta_0-\beta_0|\le c\epsilon$, $\Vert\hat\beta_A-\beta_A\Vert_1\le \dfrac{\epsilon}{\sqrt{\log{n}}}$ and $\sup_{t\in A_n}|\hat h_j(t)-h_j(t)|\le c\dfrac{\epsilon}{\phi\Delta_1}$ for all $j$, where $A_n$ is defined as in \eqref{Adef}. Then
\begin{eqnarray}
&&|(\hat h(X_A)^\T\hat\beta_A+\hat\beta_0)-(h(X_A)^\T\beta_A+\beta_0)|\\
&\le& |\hat\beta_0-\beta_0|+\Vert\hat h(X_A)\Vert_{\infty}\Vert\hat\beta_A-\beta_A\Vert_1+\Vert\hat h(X_A)- h(X_A)\Vert_{\infty}\Vert\beta_A\Vert_1\\
&\le&|\hat\beta_0-\beta_0|+2\sqrt{\log{n}}\Vert\hat\beta_A-\beta_A\Vert_1+\phi\Delta_1\Vert\hat h(X_A)- h(X_A)\Vert_{\infty},
\end{eqnarray}
which is smaller than $\epsilon$ as long as $h_j(X_j)\in A_n$ for all $j$. Therefore, take $\gamma_1=1/2$ in $A_n$, we have
\begin{equation}
\Pr(|(\hat h(X)^\T\hat\beta+\hat\beta_0)-(h(X)^\T\beta+\beta_0)|
\ge\dfrac{\epsilon}{2})\le \Pr(\cup_{j\in A}h_j(X_j)\in A_n)\le \dfrac{csn^{-1/4}}{\sqrt{\log{n}}},
\end{equation}
which will be smaller than $\epsilon$ for sufficiently large $n$.

Therefore, by Lemma~\ref{A}, \eqref{l1beta}, \eqref{beta0}, we have the desired conclusion.
\end{proof}

\bibliographystyle{agsm}
\bibliography{ref}

\end{document}